\documentclass[11pt]{article}
\usepackage{amssymb,latexsym,amsmath}
\usepackage{xcolor}
\usepackage{bm}
\usepackage{amsmath,amscd}
\usepackage{graphicx}
\usepackage{newtxtext}
\usepackage{newtxmath}
\usepackage{geometry}
\usepackage{natbib}
\usepackage{hyperref}
\hypersetup{
    colorlinks = true,
    urlcolor   = blue,
    citecolor  = black,
}
\newtheorem{theorem}{Theorem}[section]

\newtheorem{definition}{Definition}[section]
\newcommand{\RomanNumeralCaps}[1]
\linenumbers

\title{On Jean-Marie Souriau's\\
 geometric quantization of the relativistic electron}

\author{\textbf{G. de Saxc\'e} \\
Univ. Lille, CNRS, Centrale Lille, UMR 9013 – LaMcube \\
Laboratoire de m\'ecanique multiphysique multi\'echelle, \\
F-59000, Lille, France, Email: gery.de-saxce@univ-lille.fr}

\begin{document}
\maketitle
\begin{abstract}
The aim of this paper is to revisit, in Souriau's book "Structure of Dynamical Systems", the chapter devoted to the geometric quantization where the justifications of important results and formulae are not given and are difficult to prove. After recalling the coadjoint orbit method and its application to the relativistic particle with spin, we state and prove two keystone theorems that allow to equip its  prequantum manifold with the symplectic and contact structures. We apply them to the relativistic particle with spin 1/2, leading to the Dirac equation and the conservation of the probability current. We propose also an identity of conservation of the spin current and, invoking the Kaluza-Klein theory, a systematic construction of the symmetries of charge conjugation, parity transformation and time reversal which seems to us more convenient and readable than that of the classical presentations.
\end{abstract}           

{\bf Keywords:} symplectic mechanics, geometric quantization, Dirac equation, Kaluza-Klein theory

\vspace{0.2cm}

{\bf MSC Codes }  22E70; 53D20; 53D50; 81S10; 83C10; 83E15









\section{Introduction}

In a seminal work,  Jean-Marie Souriau proposed for the classical description of an elementary particle to study a family of homogeneous spaces for a symmetry group, the coadjoint orbits, and to classify them (\cite{SSD, SSDEng}). On this basis, he introduced in \cite{Souriau 1966} the geometric quantization which is  a refinement of the representation theory based on the classical-quantum correspondance. The reader not  familiar with this topic can consult, in addition to Souriau's references previously cited, \cite{Kostant 1970} who is with Souriau the other pioneer of this method, \cite{Kijowski 1977} and \cite{Woodhouse 1991}. The pre-quantization step consists in constructing a  $\mathsf{U}(1)$-principal bundle of which the base space is a symplectic manifold, a coadjoint orbit of the symmetry group. By classifying coadjoint orbits, one classifies the elementary particles known to physicists, whether elementary such as quarks and leptons or composite such as hadrons. On this basis, he proposed an intrinsic  method of quantization  that gives a precise sense to the correspondence principle.

As he says in the introduction of "Structure of Dynamical Systems" (\cite{SSD, SSDEng}), this method "leaves open many mathematical problems and questions of interpretation. We present this method of quantization above all as a subject of reflection for the researcher". Our goal in this work is to revisit his quantization of the relativistic particle with spin $\frac{1}{2}$ leading to Dirac equation for the electron. The justifications of important results and formulae are not given and are difficult to prove. For example, the proof of (\cite{SSD, SSDEng}, Formula (18.71)) seems requiring the spin group which is not considered in this book. On the other hand, Souriau developped an original theory of spinors with a purely algebraic construction of the spin group in "Calcul lin\'eaire" (\cite{CL}) and in "G\'eom\'etrie et Relativit\'e" (\cite{GR}). This theory seemed to us also worthy of interest and to be made available to the specialists in an English version. It will be our starting point. Although the two approaches, that of "Structure of Dynamical Systems" on the one hand and that of  "Calcul lin\'eaire" and "G\'eom\'etrie et Relativit\'e" on the other, are consistent each of them, it must be stressed, however, that the Dirac matrices of the former one are biquaternionic (\cite{Breban 2018}) while that of the last one are quaternionic. After presenting the aforementioned spinor theory, we shall developped the electron quantization with quaternionic Dirac matrices for consistency although following the frame of "Structure of Dynamical Systems". Another difference with this book is, in the axioms of the prequantum manifold, the re-introduction as in \cite{Souriau 1975} and \cite{Duval 2011} of the Sommerfeld quantization condition, only referred in  (\cite{SSD, SSDEng}, (18.6)) but not used and replaced by (18.19). Many tools developped in this paper are not restricted to the relativistic particle with mass and spin but can be useful to study other kinds of particles, for instance the neutrinos (see \cite{SSD, SSDEng}, (19.122) to (19.134)).

This paper is a preliminary work of the author needed to tackle the four dimensional Dirac equation in the footstep of a previous article  (\cite{de Saxce 2025}) where he proposed a new symmetry group and a modified version of Kaluza-Klein theory in which a variational principle allows to find new coupled equations of the gravitation and electromagnetic field. After considering the 5D approach at the Universe scale, the aim will be to zoom in to understand with this new point of view what happens at the quantum scale. 

The paper is structured as follows.

In Section \ref{Section Generalized Hermitian spaces}, we recall a generalization of classical Hermitian spaces with a metric which is not necessarily positive definite, as proposed in \cite{CL}. Section \ref{Section - Dirac spinors} is devoted to a formulation of a spinor theory based on the use of a space of quaternionic matrices and the corresponding canonical decomposition. In Section \ref{Section - Spin group}, we construct explicitly the spin group in an algebraic manner, first in the formulation linked to the 5D universe, next in the reduced one corresponding to the classical space-time with the Minkowski metric. Section \ref{Section - Vector product} is a reminder of the generalization of the classical vector product in arbitrary finite dimension, as in \cite{CL}. Section \ref{Section - The coadjoint orbit method} is a synthesis of the formulation, as developed in \cite{SSD}, of the coadjoint orbit method and its application to the relativistic particle with spin. Section \ref{Section - Geometric prequantization} is a presentation of the geometric prequantization as proposed in \cite{SSD} with the variant using the Sommerfeld quantization condition as in \cite{Souriau 1975} and \cite{Duval 2011}. In Section \ref{Section - Prequantization of the relativistic particle with  spin 1/2}, we prequantize the space of motions of the relativistic particle with spin described in Section \ref{Section - The coadjoint orbit method}, using the decomposition of Section \ref{Section - Dirac spinors}. Section \ref{Section - Symplectic and contact structures of the prequantum manifold} contains the two keystone theorems that allow to equip the prequantum manifold with the symplectic and contact structures. Section \ref{Section - Geometric quantization} is a reminder of the procedure of geometric quantization as developed in \cite{SSD}. It is applied in Section \ref{Section - Quantization of the relativistic particle with  spin 1/2} to the relativistic particle with  spin 1/2, leading to the Dirac equation and the conservation of the probability current. We propose also an identity of  conservation of the polarization current or the spin current. In \ref{Section - Charge conjugation, parity transformation and time reversal symmetries}, invoking the Kaluza-Klein theory, we present a systematic construction of the symmetries of charge conjugation, parity transformation and time reversal  which seems to us simpler and more readable than that of the classical presentations.

\section{Generalized Hermitian spaces}
\label{Section Generalized Hermitian spaces} 

Let us begin with a brief reminder of definitions and notations used by Souriau and which stray from standard ones. Their interest is to ensure the frame independent, intrinsic feature of the approach, even if calculations are often made in a particular basis. 

In this paper, an \textbf{Euclidean space}  (called by other authors a pseudo-Euclidean space) is a real vector space  $\mathcal{T}$ of finite dimension $m$ equipped with a (covariant) metric $\bm{G}$, {\it i.e.} a nondegenerate symmetric 2-covariant tensor $\bm{G}$. If $\bm{X}, \bm{X}' \in \mathcal{T}$ and $\bm{G}$ are represented in a basis respectively by the columns $X, X'$ and the matrix $G$, then
$$ \bm{G} (\bm{X}, \bm{X}') = X^T G\,X'
$$ 
If $G$ is diagonal with element $+1$ or $-1$ on the diagonal, the basis is called orthonormal. The number $p$ of positive values on the diagonal is called the positive index of inertia. If $m  \geq 2$ and $p = 1$, the space is said hyperbolic. To every vector $\bm{X}$ is associated one and only one linear form $\bm{X}^*$  such that $ \bm{X}^* \bm{X}'  = \bm{X}^* (\bm{X}') = \bm{G} (\bm{X}, \bm{X}')$, called its \textbf{adjoint} and represented by the row $ X^* = X^T G$ . More generally, if $\bm{A}$ is a linear map from an Euclidean space $\mathcal{T}_0$ into another one $\mathcal{T}$, its adjoint is the linear map $\bm{A}^*$ from $\mathcal{T}$ into $\mathcal{T}_0$ such that
$$ \forall \bm{X}\in\mathcal{T},\quad \forall \bm{X}_0\in\mathcal{T}_0,\qquad 
    \bm{G} (\bm{X}, \bm{A}\,\bm{X}_0) = \bm{G}_0 (\bm{A}^*\bm{X},  \bm{X}_0)\
$$ 
If $\bm{A}$ is represented by the matrix $A$ in bases of $\mathcal{T}_0$ and $\mathcal{T}$, $\bm{A}^*$  is represented by the matrix
\begin{equation}
   A^* = G_0^{-1} A^T G\ .
\label{A^* = G_0^(-1) A^T G} 
\end{equation}
We verify that:
\begin{equation}
 (\bm{A} + \bm{A}')^* = \bm{A}^* + \bm{A}'^*,\qquad 
    (\bm{A}\,\bm{B})^* = \bm{B}^* \bm{A}^*,\qquad 
    (\bm{A}^*)^* = \bm{A}
\label{rules for the adjoint}
\end{equation}
The linear map is self-adjoint (resp. anti-self-adjoint or skew-adjoint) if
$$ \bm{A} = \bm{A}^*\qquad (\mbox{resp.}\quad \bm{A} = - \bm{A}^*)\ .
$$

If $\mathcal{T}^\mathbb{C}$ is a complexification of a $n$-dimensional Euclidean real vector space $\mathcal{T}$, it is, as $2 \, n$-dimensional real vector space, an Euclidean space for
$$ \forall \bm{X}, \bm{X}',  \bm{Y}, \bm{Y}' \in \mathcal{T}, \qquad
   \bm{G}^\mathbb{C} (\bm{Z}, \bm{Z}') = \bm{G}^\mathbb{C} (\bm{X} + i \, \bm{Y}, \bm{X}' + i \, \bm{Y}')  
                                          = \bm{G} (\bm{X}, \bm{X}')  + \bm{G} (\bm{Y}, \bm{Y}') 
$$
We say it is a \textbf{Hermitian space}  (called "Euclidean space" in \cite{SSD, SSDEng}).
As $\bm{G}^\mathbb{C} (\bm{Z}, \bm{Z}') $ must be real, we define the Hermitian \textbf{adjoint}  of $\bm{Z}$ (called "transpose" by Souriau) as the unique $\mathbb{C}$-linear form $\bm{Z}^\dag$ such that 
\begin{equation}
    \bm{G}^\mathbb{C} (\bm{Z}, \bm{Z}')  = \Re (\bm{Z}^\dag  \bm{Z}')  
\label{G (Z, Z') = Re (Z^dag Z')}
\end{equation}
$\mathbb{C}$ is a Hermitian space for the metric
$$ \bm{G}^\mathbb{C}(z, z') = \bm{G}^\mathbb{C} (x + i \, y, x' + i \, y')  = x \, x' + y  \, y'  = \Re (\bar{z} \, z')
$$
then, owing to (\ref{G (Z, Z') = Re (Z^dag Z')}), the adjoint $z^\dag$ of $z$ is its conjugate $\bar{z}$. If $\bm{Z}$ is represented in a $\mathbb{C}$-basis of $\mathcal{T}^\mathbb{C}$ by the column $Z$, its adjoint $\bm{Z}^\dag$ is represented by the row 
$$ Z^\dag = \bar{Z}^T G
$$
The adjoint of a $\mathbb{C}$-linear map $\bm{A}$ between Hermitian spaces will be denoted $\bm{A}^\dag$. If $\bm{A}$ is represented by the matrix $A$ in $\mathbb{C}$-bases, $\bm{A}^\dag$  is represented by the matrix
\begin{equation}
   A^\dag = G_0^{-1} \bar{A}^T G
\label{A^dag = G_0^(-1) bar(A)^T G} 
\end{equation}
The rules concerning the adjoint for Hermitian spaces are similar to the corresponding ones for Euclidean spaces, replacing $*$ by $\dag$ in (\ref{rules for the adjoint}). $\Re (\bm{Z}^\dag  \bm{Z}')  $ is symmetric with respect to $Z$ and $Z'$ but it is worth to note that $\Im (\bm{Z}^\dag  \bm{Z}')  $ is skew-symmetric with respect to $Z$ and $Z'$
$$ \Im (\bm{Z}^\dag  \bm{Z}')  = - \Im (\overline{\bm{Z}^\dag  \bm{Z}'})  
= - \Im ((\bm{Z}^\dag  \bm{Z}')^\dag) 
$$
\begin{equation}
   \Im (\bm{Z}^\dag  \bm{Z}') = - \Im (\bm{Z}'^\dag  \bm{Z}) 
\label{Im (Z^dag Z') - - Im(Z'^dag Z)}
\end{equation}
A linear map $\bm{A} $ is Hermitian (resp. anti-Hermitian) if:
$$ \bm{A} = \bm{A}^\dag\qquad (\mbox{resp.}\quad \bm{A} = - \bm{A}^\dag)
$$
In particular, if $\mathbb{R}^n$ is equipped with the standard Euclidean metric, the adjoint of $Z \in \mathbb{C}^n$ is the row $Z^\dag = \bar{Z}^T$ and the adjoint of the matrix $A \in \mathbb{C}^{n \times n}$ is the matrix $A^\dag = \bar{A}^T$. Hence we recover the classical definition of Hermitian spaces. The previous considerations provide a generalization of classical Hermitian spaces with a metric which is not necessarily positive definite and that will allow to work with Dirac matrices which are in full coherence all Hermitian. 

\section{Dirac spinors}
\label{Section - Dirac spinors}

Throughout the paper, we are working in Planck natural units for which $\hslash = c = 1$. The spinors are elements $\bm{\psi}$ of a Hermitian space $\mathcal{E}$ of dimension $4$ and index of inertia $2$. In a suitable orthonormal basis, it is represented by the matrix
\begin{equation}
     G = \left[ {{\begin{array}{cc}
       1_{\mathbb{R}^2}    \hfill &    0                           \hfill  \\
        0                            \hfill &  - 1_{\mathbb{R}^2} \hfill  \\
   \end{array} }} \right] 
\label{G = ((1 0)(0 -1))}   
\end{equation}
In this basis, the spinor $\bm{\psi}$ is represented by the column
\begin{equation}
    \psi = \left[ {{\begin{array}{c}
           \psi'                           \hfill  \\
           \psi''                           \hfill  \\
   \end{array} }} \right] 
\label{psi = (psi' psi'')}
\end{equation}
where $\psi'$ and $\psi''$ are called half-spinors
$$   \psi' = \left[ {{\begin{array}{c}
           z'_1                           \hfill  \\
           z'_2                           \hfill  \\
   \end{array} }} \right] , \qquad
      \psi'' = \left[ {{\begin{array}{c}
           z''_1                           \hfill  \\
           z''_2                           \hfill  \\
   \end{array} }} \right] 
$$
They are elements of the Hermitian space $\mathbb{C}^2$ with the standard metric, then 
$$   \psi'^\dag  \psi' = \left[ \overline{z'_1} \; \; \overline{z'_2}\right] \,
\left[ {{\begin{array}{c}
           z'_1                           \hfill  \\
           z'_2                           \hfill  \\
   \end{array} }} \right] = \mid  z'_1 \mid^2 + \mid  z'_2 \mid^2 = \mid \psi' \mid^2
$$
and similar notations for $\psi''$. Consequently, one has
$$   \psi^\dag  \psi = \left[ \psi'^\dag \; \; - \psi''^\dag\right] \,
\left[ {{\begin{array}{c}
           \psi'                           \hfill  \\
           \psi''                           \hfill  \\
   \end{array} }} \right] =  \mid  \psi' \mid^2 - \mid  \psi'' \mid^2 
$$
We represent the quaternions as the elements of the real space $\mathbb{H}$ of matrices of $\mathbb{C}^{2 \times 2}$ spanned by the $\mathbb{R}$-basis
$$ 1_{\mathbb{C}^2} = \left[ {{\begin{array}{cc}
       1    \hfill &    0                           \hfill  \\
        0                            \hfill &  1    \hfill  \\
   \end{array} }} \right] , \qquad
   I = \left[ {{\begin{array}{cc}
       0    \hfill &    - i                           \hfill  \\
        - i                            \hfill &  0    \hfill  \\
   \end{array} }} \right] , \qquad
      J = \left[ {{\begin{array}{cc}
       0    \hfill &    - 1                          \hfill  \\
        1                            \hfill &  0    \hfill  \\
   \end{array} }} \right] , \qquad
      K = \left[ {{\begin{array}{cc}
       - i    \hfill &    0                          \hfill  \\
        0                            \hfill &  i    \hfill  \\
   \end{array} }} \right] 
$$
such that $ I^2 =J^2 =K^2 = - 1_{\mathbb{C}^2}, \quad 
J K = - K J = I, \quad 
K I = - I K = J, \quad 
I J = - J I = K $ and the \textbf{Pauli matrices} are 
$$ \sigma_1 = i \, I, \quad \sigma_2 = i \, J \quad 
\sigma_3 = i \, K
$$ 
We have the decomposition
$$  Q \in \mathbb{H} \qquad \Leftrightarrow \qquad 
      Q = y \, 1_{\mathbb{C}^2}  + x_1 I + x_2 J + x_3 K, \quad y, x_1, x_2, x_3 \in \mathbb{R}
$$
then its adjoint $Q^\dag$ is it conjugate $\bar{Q}$
$$ Q^\dag = \bar{Q}  = y \, 1_{\mathbb{C}^2}  - x_1 I - x_2 J - x_3 K
$$

Let us consider the vector space $End(\mathcal{E})$ of linear maps from $\mathcal{E}$ into itself and the subspace $\bm{\Gamma}$ of its elements  $\bm{\gamma}$  which are represented in a suitable orthonormal basis of $\mathcal{E}$ by quaternionic matrices, {\it i.e.} the elements of $\mathbb{C}^{4 \times 4}$
$$       \gamma= \left[ {{\begin{array}{cc}
       A  \hfill &  C    \hfill  \\
       B  \hfill &  D    \hfill  \\
   \end{array} }} \right] 
$$
of which the $2 \times 2$ blocks $A, B, C, D$ are quaternions. The set $\Gamma$ of quaternionic matrices is a real vector space of dimension 16 that admits a \textbf{canonical decomposition} 
\begin{equation}
   \Gamma = \Gamma_i \oplus \Gamma_{+} \oplus \Gamma_{-}
\label{Gamma = Gamma-i oplus Gamma_(+) oplus Gamma_(-)}
\end{equation}
into the set of isotropic matrices 
$$ \Gamma_i  = \lbrace \gamma =  \lambda \, 1_{\mathbb{C}^4}, \quad \lambda \in \mathbb{R} \rbrace 
$$
the set of Hermitian traceless matrices
$$ \Gamma_{+}  = \lbrace \gamma \in \mathcal{M} \quad \mbox{such that} \quad Tr (\gamma) = 0, \quad \gamma^\dag = \gamma \rbrace 
$$
and the set of anti-Hermitian traceless matrices
$$ \Gamma_{-}  = \lbrace \gamma \in \mathcal{M} \quad \mbox{such that} \quad Tr (\gamma) = 0, \quad \gamma^\dag = - \gamma \rbrace 
$$
Taking into account (\ref{A^dag = G_0^(-1) bar(A)^T G}) and (\ref{G = ((1 0)(0 -1))}), the adjoint of quaternionic matrices is
$$       \gamma= \left[ {{\begin{array}{cc}
       \bar{A}  \hfill &  - \bar{B}    \hfill  \\
       - \bar{C}  \hfill &  \bar{D}    \hfill  \\
   \end{array} }} \right] 
$$
If $\gamma \in \Gamma_{+}$, we find $A = \bar{A}$, $D = \bar{D}$, $B = - \bar{C}$ and  $Tr (A) + Tr (D) = 0$.  Consequently, there exists $t \in \mathbb{R}$ such that $A = t \, 1_{\mathbb{C}^2}$, $D = - t \, 1_{\mathbb{C}^2} $ and finally
\begin{equation}
       \gamma \in  \Gamma_{+} \;\; \Leftrightarrow \;\;
        \gamma= \left[ {{\begin{array}{cc}
        t \, 1_{\mathbb{C}^2}  \hfill &  - \bar{B}  \hfill  \\
       B  \hfill &  -  t \, 1_{\mathbb{C}^2}    \hfill  \\
   \end{array} }} \right]  
   = \left[ {{\begin{array}{cc}
        t \, 1_{\mathbb{C}^2}                                        \hfill &  - y \, 1_{\mathbb{C}^2}  + x_1 I + x_2 J + x_3 K  \hfill  \\
       y \, 1_{\mathbb{C}^2}  + x_1 I + x_2 J + x_3 K  \hfill &  -  t \, 1_{\mathbb{C}^2}                                         \hfill  \\
   \end{array} }} \right]
\label{gamma in Gamma_(+) <=>}
\end{equation}
where $t, x_1, x_2, x_3, y \in \mathbb{R}$. Therefore $\Gamma_{+}$ is a $5$-dimensional subspace of $ \Gamma$ because spanned by the basis
$$ \gamma_0 = \left[ {{\begin{array}{cc}
        1_{\mathbb{C}^2}  \hfill &  0 \hfill  \\
       0  \hfill &  -  1_{\mathbb{C}^2}    \hfill  \\
   \end{array} }} \right], \quad
      \gamma_5 = \left[ {{\begin{array}{cc}
        0  \hfill &  -1_{\mathbb{C}^2} \hfill  \\
       1_{\mathbb{C}^2} \hfill &  0    \hfill  \\
   \end{array} }} \right]
 $$
 \begin{equation}
  \gamma_1 = \left[ {{\begin{array}{cc}
        0  \hfill &  I \hfill  \\
       I  \hfill &  0    \hfill  \\
   \end{array} }} \right], \quad
   \gamma_2 = \left[ {{\begin{array}{cc}
        0  \hfill &  J \hfill  \\
       J  \hfill &  0    \hfill  \\
   \end{array} }} \right], \quad
   \gamma_3 = \left[ {{\begin{array}{cc}
        0  \hfill &  K \hfill  \\
       K  \hfill &  0    \hfill  \\
   \end{array} }} \right]
 \label{defi gamma_alpha}
 \end{equation}
Likewise, we can prove that 
$$    \gamma \in  \Gamma_{-} \;\; \Leftrightarrow \;\;
        \gamma = \left[ {{\begin{array}{cc}
         x_1 I + x_2 J + x_3 K                                      \hfill &  - y \, 1_{\mathbb{C}^2}  - x''_1 I - x''_2 J - x''_3 K  \hfill  \\
       - y \, 1_{\mathbb{C}^2}  + x''_1 I + x''_2 J + x''_3 K  \hfill &   x'_1 I + x'_2 J + x'_3 K                                   \hfill  \\
   \end{array} }} \right]
$$
Therefore $ \Gamma_{-}$ is a $10$-dimensional subspace of $ \Gamma$ because spanned by the basis composed of the six matrices $\gamma_i \gamma_j$ with $0 \leq i < j \leq 3$ and the 4 matrices $\gamma_i \gamma_5$ with $0 \leq i \leq 3$. 

\begin{center}
\begin{tabular}{l l l}
 \hline \hline
  space                                                                                             & elements                             & dimension  \\
  \hline \hline
  $\Gamma = \Gamma_i \oplus \Gamma_{+} \oplus \Gamma_{-}$  & quaternionic                        & 16              \\
    \hline
  $\Gamma_i$                                                                                  & isotropic                              & 1                 \\
  $\Gamma_{+}$                                                                              & Hermitian and traceless       & 5                  \\
  $\Gamma_{-}$                                                                              & anti-Hermitian and traceless & 10                 \\
   \hline \hline
\end{tabular}
\end{center}
\begin{center}
\textbf{Table 1:} Quaternionic matrix space and subspaces
\end{center}

The decomposition is summarized in Table 1.
Let us focus now our attention on the properties of the space $\Gamma_{+}$. By convention, The Greek indices are $0, 1, 2 , 3, 5$ while the Latin indices are $0, 1, 2 , 3$. We define the map
\begin{equation}
    \gamma: \mathbb{R}^5 \rightarrow \Gamma_{+}: X \mapsto \gamma (X) = X^\alpha \gamma_\alpha   
\label{gamma from R^5 into Gamma_(+)}
\end{equation}
Comparing to (\ref{gamma in Gamma_(+) <=>}) allows us to do the identification
$$ X^0 = t,\;\; 
     X^1 = x_1,\;\, 
     X^2 = x_2,\;\,
     X^3 = x_3,\;\,
     X^5 = y
$$
This shows that the map $\gamma$ is $\mathbb{R}$-linear and one-to-one between $\mathbb{R}^5$ and $\Gamma_{+}$. Next, we have 
$$       (\gamma (X))^2 = \left[ {{\begin{array}{cc}
        t \, 1_{\mathbb{C}^2}  \hfill &  - \bar{B}  \hfill  \\
       B  \hfill &  -  t \, 1_{\mathbb{C}^2}    \hfill  \\
   \end{array} }} \right]^2 
   = \left[ {{\begin{array}{cc}
        t^2 1_{\mathbb{C}^2} - \bar{B} \, B \hfill &  0  \hfill  \\
       0  \hfill &  - B \, \bar{B}  +  t^2 1_{\mathbb{C}^2}    \hfill  \\
   \end{array} }} \right]
$$ 
$$       (\gamma (X))^2 =  \lbrack t^2 - (x_1)^2 - (x_2)^2- (x_3)^2- y^2 \rbrack \, 1_{\mathbb{C}^4} 
$$ 
If we equip $E_{1, 4} = \mathbb{R}^5$ with a structure of a hyperbolic Euclidean space for the metric of signature $1 + 4$
\begin{equation}
      X^* X =  (X^0)^2 - (X^1)^2 - (X^2)^2 -(X^3)^2 - (X^5)^2
               =  t^2 - (x_1)^2 - (x_2)^2- (x_3)^2- y^2
\label{X^* X = metric of E_(1,4)}
\end{equation}
we obtain
\begin{equation}
     (\gamma (X))^2 =  (X^* X)  \, 1_{\mathbb{C}^4} 
\label{(gamma(X))^2 = (X^* X ) 1_(C^4)}
\end{equation}
from which we deduce easily 
\begin{equation}
   \forall U, V \in \mathbb{R}^5, \qquad 
   \frac{1}{2} \lbrack \gamma(U) \gamma(V) + \gamma(V) \gamma(U) \rbrack
    =  (U^* V)  \, 1_{\mathbb{C}^4} 
\label{gamma(U) gamma(V) + gamma(V) gamma(U) = (U^* V) 1_(C^4)}
\end{equation}
In particular, we have
\begin{equation}
       (\gamma_0)^2 = 1_{\mathbb{C}^4} , \quad
     (\gamma_1)^2 = (\gamma_2)^2 = (\gamma_3)^2 = (\gamma_5)^2 =  - 1_{\mathbb{C}^4}
\label{(gamma_alpha)^2 = for alpha = 0...5}
\end{equation}
\begin{equation}
      \gamma_\alpha \gamma_\beta + \gamma_\beta \gamma_\alpha = 0 \quad \mbox{if} \quad \alpha \neq \beta
\label{gamma_alpha gamma_beta + gamma_beta gamma_alpha = 0 for alpha neq beta}
\end{equation}
As $\gamma(U)$ belongs to $\Gamma_{+}$, it is Hermitian then, for all spinor $\psi$, the scalar $\psi^\dag \gamma(U) \, \psi$ is real. For a timelike vector $U$ of  $E_{1, 4} $ ($U^* U > 0$), we have
\begin{equation}
\psi^\dag \gamma(U) \, \psi 
\begin{cases}
> 0 & \mbox{if} \; U \; \mbox{ is future-directed}\\
< 0 & \mbox{if} \; U \; \mbox{ is past-directed}
\end{cases}
\label{psi^dag gamma(U) psi > 0 if U is future-directed}
\end{equation}
For the proof, the reader is referred to (\cite{CL}, \S 31, (19)).

\section{Spin group}
\label{Section - Spin group}

The  set of matrices $S$ of $\Gamma$ which are unitary
$$ S^\dag S = 1_{\mathbb{C}^4}
$$
is a group denoted $\mbox{Spin}(1,4)$ and called the \textbf{spin group}. 

\begin{theorem}[Souriau]
The elements of the spin group have the following properties:
\begin{itemize}
\item[(i)] $ S \in  \mbox{Spin}(1,4), \qquad\gamma \in \Gamma_{+} \qquad \Rightarrow \qquad S \, \gamma \, S^\dag \in \Gamma_{+}  $
\item[(ii)] If $S$ belongs to the spin group and $V \in E$, we put
$$ \diamondsuit \qquad S \, \gamma (V) \, S^\dag = \gamma (P_S (V))
$$
then
$$ \heartsuit \qquad P_S \quad \mbox{is a rotation of} \quad \mathbb{O} (1,4)
$$
\item[(iii)] If $S$ belongs to the spin group, then there exist $r\in \mathbb{R}$ and unit quaternions $E, F, G$ such that
\begin{equation}
          S  = \left[ {{\begin{array}{cc}
       \sqrt{1 + r^2} \, 1_{\mathbb{C}^2} \hfill &  r \, \bar{E}    \hfill  \\
       r \, E \hfill &  \sqrt{1 + r^2} \, 1_{\mathbb{C}^2}   \hfill  \\
   \end{array} }} \right] \,
    \left[ {{\begin{array}{cc}
       F  \hfill &  0    \hfill  \\
       0  \hfill &  G    \hfill  \\
   \end{array} }} \right] \,
   = \left[ {{\begin{array}{cc}
       \sqrt{1 + r^2} \, F \hfill &  r \, \bar{E} \, G    \hfill  \\
       r \, E \, F \hfill &  \sqrt{1 + r^2}  \, G  \hfill  \\
   \end{array} }} \right]
\label{standard form of an element of Spin(1, 4)}
\end{equation}
\item[(iv)]  If $S$ belongs to the spin group, then there exist a quaternion $H$ and unit quaternions $F, G$ such that
\begin{equation}
          S  = \left[ {{\begin{array}{cc}
       \sqrt{1 + \mid H \mid^2} \, 1_{\mathbb{C}^2} \hfill &  \bar{H}    \hfill  \\
       H \hfill &  \sqrt{1 + \mid H \mid^2} \, 1_{\mathbb{C}^2}   \hfill  \\
   \end{array} }} \right] \,
    \left[ {{\begin{array}{cc}
       F  \hfill &  0    \hfill  \\
       0  \hfill &  G    \hfill  \\
   \end{array} }} \right] \,
   = \left[ {{\begin{array}{cc}
       \sqrt{1 + \mid H \mid^2} \, F \hfill &  \bar{H} \, G    \hfill  \\
       H  \, F \hfill &  \sqrt{1 + \mid H \mid^2} \, G  \hfill  \\
   \end{array} }} \right] 
\label{standard form 2 of an element of Spin(1, 4)}
\end{equation}
\end{itemize}
\end{theorem}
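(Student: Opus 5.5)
For (i), take $S$ in the spin group and $\gamma\in\Gamma_{+}$. By the rules (\ref{rules for the adjoint}) adapted to $\dag$, we have $(S\gamma S^\dag)^\dag = S\gamma^\dag S^\dag = S\gamma S^\dag$, so $S\gamma S^\dag$ is Hermitian. Since $S^\dag S = 1_{\mathbb{C}^4}$ and we are in finite dimension, also $S^\dag = S^{-1}$. Hence $Tr(S\gamma S^\dag)=Tr(\gamma)=0$. It remains to check that $S\gamma S^\dag$ is again quaternionic, i.e.\ lies in $\Gamma$. This follows from two facts. First, products of quaternionic $4\times4$ matrices are quaternionic, because $\mathbb{H}$ is closed under products and sums and the block product only uses these operations. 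Second, the formula for the adjoint shows that $\Gamma$ is stable under $\dag$, since $\bar{A}\in\mathbb{H}$ whenever $A\in\mathbb{H}$. So $S\gamma S^\dag\in\Gamma$ is traceless and Hermitian, hence in $\Gamma_{+}$.

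For (ii), the map $\gamma$ of (\ref{gamma from R^5 into Gamma_(+)}) is a linear bijection $\mathbb{R}^5\to\Gamma_{+}$. By (i), $P_S=\gamma^{-1}\circ(\gamma\mapsto S\gamma S^\dag)\circ\gamma$ is therefore well-defined and linear. Using $S^\dag=S^{-1}$ and (\ref{(gamma(X))^2 = (X^* X ) 1_(C^4)}), we get
$$\gamma(P_S V)^2 = S\gamma(V)^2S^{-1} = (V^*V)\,1_{\mathbb{C}^4}.$$
Hence $P_S$ preserves the quadratic form of $E_{1,4}$ and lies in $\mathbb{O}(1,4)$.

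The claim that it is a rotation needs more work, and I would settle it only after (iii)–(iv). One route is connectedness. The parametrization in (iv) shows that the spin group is the continuous image of $\mathbb{H}\times SU(2)\times SU(2)$, which is connected. The map $S\mapsto P_S$ is continuous, $P_{1}=1$, and $\det P_S=\pm1$, so $\det P_S=1$ on the whole group. To keep orthochronous-ness out of the question, I read ``rotation'' as $\det=1$.

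For (iii)–(iv), write $S=\begin{bmatrix}A&C\\B&D\end{bmatrix}$ with quaternion blocks. Since $SS^\dag=1$, the formula for the adjoint gives
$$A\bar A - C\bar C = 1,\qquad D\bar D - B\bar B = 1,\qquad B\bar A = D\bar C .$$
For a quaternion, $Q\bar Q=|Q|^2\,1_{\mathbb{C}^2}$, so $|A|^2=1+|C|^2\ge1$ and $|D|^2=1+|B|^2\ge1$. Now put $F=A/|A|$ and $G=D/|D|$. Next compare $S^\dag S=1$, whose off-diagonal block gives $\bar A C = \bar B D$, with the relations above. Taking norms yields $|A||C|=|B||D|$, and combined with $|A|^2-|C|^2=|D|^2-|B|^2=1$ this forces $|B|=|C|=:r$ and $|A|=|D|=\sqrt{1+r^2}$. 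If $r\neq0$, define $E=BF^{-1}/r=B\bar F/r$, a unit quaternion. Then $B=rEF$, and from $B\bar A=D\bar C$ we get
$$C=\overline{D^{-1}B\bar A}=\frac{A\bar B\,D}{|D|^2}.$$
Substituting gives $C=r\,F\bar F\bar E\,G\cdot\sqrt{1+r^2}^{\,2}/(1+r^2)=r\bar E G$, which is (\ref{standard form of an element of Spin(1, 4)}). If $r=0$, any unit $E$ works. Then (iv) follows by setting $H=rE$, so that $|H|=r$.

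The main obstacle is this last algebraic step: keeping the order of the noncommutative quaternion products straight when solving for $C$. The determinant argument for (ii) depends on it, so it has to be done carefully.
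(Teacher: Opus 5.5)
Your proof is correct, and for (i), (iii) and (iv) it follows the paper's argument.

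For (i), both use closure of $\Gamma$ under products and $\dag$, hermiticity, and the trace identity. For (iii), both use the block relations from unitarity, take the norm (determinant) of the off-diagonal relation to get $|B|=|C|$, and then solve for $C=r\bar E G$. The paper works only with $S^\dag S=1_{\mathbb{C}^4}$, whereas you mix in $SS^\dag=1_{\mathbb{C}^4}$. That is fine. In fact the diagonal blocks of the two conditions, $|A|^2=1+|B|^2$ and $|A|^2=1+|C|^2$, give $|B|=|C|$ at once, with no monotonicity argument needed.

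The genuine difference is in (ii). The paper stops after the squaring identity $(P_SV)^*(P_SV)=V^*V$, so its ``rotation of $\mathbb{O}(1,4)$'' only means an isometry of $E_{1,4}$. You additionally prove $\det P_S=1$ by connectedness. Your argument actually gives more than you claim: $P_S$ lies in the identity component of $\mathbb{O}(1,4)$, so it is also orthochronous.

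The price is that (ii) now depends on (iii)--(iv), and it needs one step you did not state. You must show that, conversely, every matrix of the form (iv) satisfies $S^\dag S=1_{\mathbb{C}^4}$. Without this you only know that $\mbox{Spin}(1,4)$ is contained in the connected image of $\mathbb{H}\times SU(2)\times SU(2)$. That does not make $\mbox{Spin}(1,4)$ connected, and $S\mapsto\det P_S$ is only defined on the spin group anyway.

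The paper makes exactly this check at the end of its proof of (iii). It is a one-line block computation with $a=\sqrt{1+|H|^2}$: the diagonal blocks give $a^2-|H|^2=1$, and the off-diagonal terms cancel. Add it, and your connectedness argument is complete.
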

\textbf{Proof.}  
(i) If $S \in  \mbox{Spin}(1,4)$ and $\gamma \in \Gamma_{+}  $, $S \, \gamma \, S^\dag$ is a product of elements of $\Gamma$, then an element of $\Gamma$. We have $(S \, \gamma \, S^\dag)^\dag = (S^\dag)^\dag \, \gamma^\dag \, S^\dag = S \, \gamma \, S^\dag$ and $Tr ( S \, \gamma \, S^\dag) = Tr ( \gamma \, S^\dag S)  = Tr (\gamma) = 0 $ then $\gamma$ is an element of $\Gamma_{+} $.

(ii) The previous result shows that the relation $\diamondsuit$ defines for all $S$ of $\mbox{Spin}(1,4)$ a linear map $P_S$ from $E_{1,4}$ into $E_{1,4}$. Squaring $\diamondsuit$ and owing to (\ref{(gamma(X))^2 = (X^* X ) 1_(C^4)}), we obtain
$$ (V^* V)  \, 1_{\mathbb{C}^4} = (S_P (V))^* S_P (V))  \, 1_{\mathbb{C}^4} 
$$
that shows $P_S$ is a rotation of $\mathbb{O} (1,4)$.

(iii) If $S \in \mbox{Spin}(1,4)$, there exist quaternions $A, B, C, D$ such that
$$       S = \left[ {{\begin{array}{cc}
       A  \hfill &  C    \hfill  \\
       B  \hfill &  D    \hfill  \\
   \end{array} }} \right] 
$$
Developping the equation $S^\dag S = 1_{\mathbb{C}^4}$, we find
\begin{equation}
    \bar{A} \, A = 1_{\mathbb{C}^2} + \bar{B} \, B, \quad 
    \bar{D} \, D = 1_{\mathbb{C}^2} + \bar{C} \, C, \quad
    \bar{C} \, A = \bar{D} \, B 
\label{bar(A) A = 1 + bar(B) B & bar(D) D = 1 + bar(C) C & bar(C) A = bar(D) C}
\end{equation}
As for every quaternion $Q$ there exists $s \geq 0$ such that $\bar{Q} \, Q  = Q \, \bar{Q} = s \, 1_{\mathbb{C}^2}  $, we can state
$\bar{B} \, B = r^2  1_{\mathbb{C}^2},  \bar{C} \, C = r'^2  1_{\mathbb{C}^2} $. The former equation (\ref{bar(A) A = 1 + bar(B) B & bar(D) D = 1 + bar(C) C & bar(C) A = bar(D) C}) gives $ \bar{A} \, A = (1 + r^2) \, 1_{\mathbb{C}^2}$, allowing to put $ A = \sqrt{1 + r^2} \, F $, $F$ being a unit quaternion. Likewise, we find $ D = \sqrt{1 + r'^2} \, G $.

Inserting these results in the last equation (\ref{bar(A) A = 1 + bar(B) B & bar(D) D = 1 + bar(C) C & bar(C) A = bar(D) C}) and calculating the determinant of each term, we get $r^2 = r'^2 $. The equation is reduced to $\bar{C} \, F = \bar{G} \, B $ or  $C = F \, \bar{B} \, G$ but, as $\mid \det (B) \mid = r^2$, there exists a unit quaternion $E$ such that $B = r \, E \, F$ and consequently
$C = r \, \bar{E} \, G$, that proves (\ref{standard form of an element of Spin(1, 4)}).

Conversely, if $S$ is defined by (\ref{standard form of an element of Spin(1, 4)}), we find $S^\dag S = 1_{\mathbb{C}^4}$.

(iv) It is a consequence of (\ref{standard form of an element of Spin(1, 4)}) with the decomposition $H = r \, E$ then $\mid H \mid = \mid r \mid \, \mid E \mid = \mid r \mid$ that achieves the proof. $\blacksquare$


\vspace{0.5cm}

For the applications to Physics, we consider $E_{1, 3}  = \mathbb{R}^4$ equipped with the \textbf{Minkowski metric}  (of signature $1 + 3$),
\begin{equation}
     X^* X = (X^0)^2 - (X^1)^2 - (X^2)^2- (X^3)^2 = t^2 - (x_1)^2 - (x_2)^2- (x_3)^2
\label{Minkowski metric defi X^*X =}
\end{equation}
that gives it a structure of hyperbolic space. The \textbf{Lorentz group} $\mathbb{SO} (1,3) $ is the set of linear transformations $P$ of the 4D space conserving this metric
$$ P^* P = 1_{\mathbb{R}^4}
$$
called Lorentz transformations.

\begin{theorem}
With the above index convention, let us define the map
\begin{equation}
       \gamma: E_{1,3} \rightarrow \Gamma_{+}: X \mapsto \gamma (X) = X^i \gamma_i  
\label{gamma : E_(1,3) -> Gamma_(+) : X -> gamma(X) = X_i gamma_i}
\end{equation}
the matrices $\gamma_i$ being defined by (\ref{defi gamma_alpha}). 
\begin{itemize}
\item The map $\gamma$ is linear and injective.  It verifies (\ref{gamma(U) gamma(V) + gamma(V) gamma(U) = (U^* V) 1_(C^4)}) and
$$(\gamma (X))^2 =  (X^* X)  \, 1_{\mathbb{C}^4}  
$$  
\item For every Lorentz transformation $P$, there exists $S \in \mbox{Spin}(1,4)$ such that
$$ \gamma(P \, V) = S \, \gamma(V) \, S^{-1}
$$
$S$ is defined up to a factor $z$ root of unity by this identity and will be denoted $S(P)$
\item $ S (P)  \, \gamma_5 = \gamma_5 \, S(P)$
\item there exist $r\in \mathbb{R}$ and unit quaternions $E, F (\bar{E} = - E)$ such that
\begin{equation}
          S (P)  = \left[ {{\begin{array}{cc}
       \sqrt{1 + r^2} \, 1_{\mathbb{C}^2} \hfill &  - r \, E    \hfill  \\
       r \, E \hfill &  \sqrt{1 + r^2} \, 1_{\mathbb{C}^2}   \hfill  \\
   \end{array} }} \right] \,
    \left[ {{\begin{array}{cc}
       F  \hfill &  0    \hfill  \\
       0  \hfill &  F    \hfill  \\
   \end{array} }} \right] \,
   = \left[ {{\begin{array}{cc}
       \sqrt{1 + r^2} \, F \hfill &  - r \, E \, F    \hfill  \\
       r \, E \, F \hfill &  \sqrt{1 + r^2}  \, F  \hfill  \\
   \end{array} }} \right]
\label{standard form of S(P)}
\end{equation}
\item there exist a quaternion $H = - \bar{H}$ and a unit quaternions $F$ such that
\begin{equation}
          S  = \left[ {{\begin{array}{cc}
       \sqrt{1 + \mid H \mid^2} \, 1_{\mathbb{C}^2} \hfill &  - H   \hfill  \\
       H \hfill &  \sqrt{1 + \mid H \mid^2} \, 1_{\mathbb{C}^2}   \hfill  \\
   \end{array} }} \right] \,
    \left[ {{\begin{array}{cc}
       F  \hfill &  0    \hfill  \\
       0  \hfill &  F   \hfill  \\
   \end{array} }} \right] \,
   = \left[ {{\begin{array}{cc}
       \sqrt{1 + \mid H \mid^2} \, F \hfill &  - H \, F   \hfill  \\
       H \, F \hfill &  \sqrt{1 + \mid H \mid^2} \, F   \hfill  \\
   \end{array} }} \right] 
\label{standard form 2 of S(P)}
\end{equation}
\end{itemize}
\label{Thm S(P) for Lorents transformations}
\end{theorem}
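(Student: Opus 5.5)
The first item is immediate from the previous section. The map $\gamma$ of (\ref{gamma : E_(1,3) -> Gamma_(+) : X -> gamma(X) = X_i gamma_i}) is the restriction of the linear bijection (\ref{gamma from R^5 into Gamma_(+)}) to the hyperplane $X^5=0$. On that hyperplane the metric (\ref{X^* X = metric of E_(1,4)}) reduces to the Minkowski metric (\ref{Minkowski metric defi X^*X =}). Hence (\ref{(gamma(X))^2 = (X^* X ) 1_(C^4)}) and (\ref{gamma(U) gamma(V) + gamma(V) gamma(U) = (U^* V) 1_(C^4)}) hold verbatim.

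For the existence of $S(P)$, extend the Lorentz transformation $P$ to $\hat P\in\mathbb{O}(1,4)$ by setting $\hat P e_5=e_5$. Then I would show that $S\mapsto P_S$ in the theorem of Souriau maps $\mbox{Spin}(1,4)$ onto the identity component of $\mathbb{SO}(1,4)$, and choose $S$ with $P_S=\hat P$. Since $S^\dag=S^{-1}$, the relation $\diamondsuit$ then reads $\gamma(PV)=S\gamma(V)S^{-1}$.

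To prove surjectivity concretely, I would use generators rather than an abstract connectedness argument:
\begin{itemize}
\item spatial rotations correspond to $S=\mathrm{diag}(F,F)$ with $F$ a unit quaternion, acting on $x_1I+x_2J+x_3K$ by $F(\cdot)\bar F$, which is the $\mathbb{SU}(2)\to\mathbb{SO}(3)$ covering;
\item boosts correspond to the first factor of (\ref{standard form of S(P)}) with $E$ pure imaginary.
\end{itemize}
For a boost one checks directly, with $\cosh$ and $\sinh$ parametrised by $\sqrt{1+r^2}$ and $r$, that it fixes $\gamma_5$ and mixes $\gamma_0$ with $\gamma(E)$. Every proper orthochronous Lorentz transformation is a boost times a rotation, so this yields (\ref{standard form of S(P)}). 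Setting $H=rE$ then gives (\ref{standard form 2 of S(P)}), with $\bar H=-H$ and $|H|=|r|$, exactly as item (iv) followed from (iii) earlier.

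The commutation $S(P)\gamma_5=\gamma_5S(P)$ follows from $\gamma(\hat P e_5)=\gamma_5$, that is $S\gamma_5S^{-1}=\gamma_5$. It can also be checked directly on the form (\ref{standard form of S(P)}), using that $\gamma_5$ is the block matrix $[[0,-1],[1,0]]$.

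For uniqueness, suppose $S$ and $S'$ both realise $P$. Then $T=S^{-1}S'$ commutes with $\gamma_0,\dots,\gamma_3$, and hence with $\gamma_5$ as just shown. Since the $\gamma_\alpha$ generate the full matrix algebra $\mathbb{C}^{4\times4}$ (their products span the 16-dimensional $\Gamma$, and complexifying gives everything), Schur's lemma gives $T=z1$. Unitarity, together with the requirement that $T$ lie in $\Gamma$ (real quaternionic blocks force $z$ real), yields $z=\pm1$, which is the "root of unity" ambiguity.

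The main obstacle is the surjectivity, i.e.\ checking that the boost parametrisation really reaches every boost with both signs and directions. I would handle it by computing $S\gamma_0S^\dag$ explicitly for the boost factor and matching it to $\cosh\phi\,\gamma_0+\sinh\phi\,\gamma(n)$. One also has to restrict to orthochronous $P$ with $\det P=1$; improper or time-reversing $P$ are not reached by unitary elements of $\Gamma$ of this form, and I would state this restriction explicitly.
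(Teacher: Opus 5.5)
The paper gives no proof of this theorem; it only refers to \S 31.B of \emph{Calcul lin\'eaire}. Your proposal therefore has to stand on its own, and it does.

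The ingredients are all sound: the restriction of the 5D identities to $X^5=0$, the check that $\mathrm{diag}(F,F)$ and the boost factor are unitary elements of $\Gamma$, the boost-times-rotation decomposition, and the uniqueness argument. For uniqueness: the $\gamma_i$ generate $\mathbb{C}^{4\times4}$, so $T=z1_{\mathbb{C}^4}$, and $T\in\Gamma$ together with $T^\dag T=1_{\mathbb{C}^4}$ forces $z=\pm1$.

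Two small additions to the boost check. First, you should record that the boost factor also fixes $\gamma(v)$ for spatial $v$ orthogonal to $e$, where $E=e_1I+e_2J+e_3K$. Second, the obstacle you single out is harmless. The computation gives $S\gamma_0S^\dag=(1+2r^2)\gamma_0+2r\sqrt{1+r^2}\,\gamma(e)$, so $r=\sinh(\phi/2)$ and $\sqrt{1+r^2}=\cosh(\phi/2)$: half the rapidity, matching the paper's later choice $r=\sqrt{(\gamma_u-1)/2}$. Since $r\mapsto 2r\sqrt{1+r^2}$ is a bijection of $\mathbb{R}$, every boost in every direction is reached.

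There is also a shortcut closer to the paper's structure. Since $\gamma_5=\gamma_0\gamma_1\gamma_2\gamma_3$, any $S$ realising $P$ satisfies $S\gamma_5S^{-1}=(\det P)\,\gamma_5$. Inserting $S\gamma_5=\gamma_5S$ into item (iii) of the preceding theorem forces $G=F$ and $r(E+\bar E)=0$, which is exactly the first standard form. Your generators are then needed only for existence; you never need the full surjectivity onto the identity component of $\mathbb{SO}(1,4)$ that you announce.

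Your closing restriction is not a formality. With the paper's definition $P^*P=1_{\mathbb{R}^4}$, the statement is false.

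Parity $\mathrm{diag}(1,-1,-1,-1)$ is realised by $\gamma_0\in\mbox{Spin}(1,4)$, but $\gamma_0$ anticommutes with $\gamma_5$. It realises the extension $\hat Pe_5=-e_5$, not yours. So existence survives on the orthochronous group, while the commutation with $\gamma_5$ and both standard forms fail there.

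Time reversal can only be realised by multiples $z\gamma_1\gamma_2\gamma_3=z\gamma_0\gamma_5$, and $(z\gamma_0\gamma_5)^\dag(z\gamma_0\gamma_5)=-|z|^2\,1_{\mathbb{C}^4}$. So no element of the spin group realises it at all.

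Hence the theorem holds exactly on the proper orthochronous group $\mathbb{SO}^+(1,3)$, which is how the paper's notation $\mathbb{SO}(1,3)$ must be read. Your phrase ``improper \dots\ not reached by unitary elements of $\Gamma$'' is accurate only with the qualifier ``of this form''.
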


In the last formula, the left matrix (containing $H$) generates a Lorentz boost, while the right matrix (containing $F$) generates a rotation of the 3D space. The reader can find the proof in \S 31.B of \cite{CL},

\section{Vector product}
\label{Section - Vector product}

We consider an oriented Euclidean space $\mathcal{T}$ of dimension $n$, {\it i.e.} there exists a volume form $vol \in \bigwedge^n \mathcal{T}^*$ such that $vol(\bm{e}_1, \ldots, \bm{e}_{n}) = 1$
for every orthonormal basis $(\bm{e}_i)$. We denote  $vol(\bm{e}_1, \ldots, \bm{e}_{q})$  the $(n - q)$-form such that 
$$ (vol(\bm{e}_1, \ldots, \bm{e}_{q})) (\bm{e}_{q + 1}, \ldots, \bm{e}_{n}) = vol (\bm{e}_1, \ldots, \bm{e}_{q},\bm{e}_{q + 1}, \ldots, \bm{e}_{n})
$$
We call \textbf{vector product} of $(n - 1)$ vectors $\bm{V}_1, \ldots, \bm{V}_{n - 1}$ the vector $\mathcal{J}(\bm{V}_1, \ldots, \bm{V}_{n - 1})$ such that 
$$ \mathcal{J}
    (\bm{V}_1, \ldots, \bm{V}_{n - 1})^* \bm{U} 
    = vol(\bm{V}_1, \ldots, \bm{V}_{n - 1},\bm{U}) 
$$
In \S 26.B of \cite{CL}, it is proved that

\begin{theorem}[Souriau]
Properties of the vector product:
\begin{itemize}
\item [$\diamondsuit$] $\mathcal{J}(\bm{V}_1, \ldots, \bm{V}_{n - 1}) \neq \bm{0}$ if and only if  $\bm{V}_1, \ldots, \bm{V}_{n - 1}$  are linearly independent
\item [$\heartsuit$] $\mathcal{J}(\bm{V}_1, \ldots, \bm{V}_{n - 1})$ is orthogonal to every argument $\bm{V}_i$ of $\mathcal{J}$
\item [$\spadesuit$] The linear map $\mathcal{J}(\bm{V}_1, \ldots, \bm{V}_{n - 2}) : \mathcal{T} \rightarrow \mathcal{T} : \bm{V} \mapsto \mathcal{J}(\bm{V}_1, \ldots, \bm{V}_{n - 2}, \bm{V})$ is one-to-one and skew-adjoint
\end{itemize}
\label{thm vector product}
\end{theorem}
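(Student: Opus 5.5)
The three properties all follow from the multilinearity and total antisymmetry of $vol$, together with the nondegeneracy of $\bm{G}$. I will work with the defining identity $\mathcal{J}(\bm{V}_1,\ldots,\bm{V}_{n-1})^*\bm{U}=vol(\bm{V}_1,\ldots,\bm{V}_{n-1},\bm{U})$. Since the adjoint map $\bm{X}\mapsto\bm{X}^*$ is a linear isomorphism from $\mathcal{T}$ onto $\mathcal{T}^*$ (because $\bm{G}$ is nondegenerate), $\mathcal{J}$ is well defined and $(n-1)$-linear. Moreover, $\mathcal{J}(\bm{V}_1,\ldots,\bm{V}_{n-1})=\bm{0}$ if and only if the linear form $\bm{U}\mapsto vol(\bm{V}_1,\ldots,\bm{V}_{n-1},\bm{U})$ vanishes identically.

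For $\diamondsuit$, suppose first that the $\bm{V}_i$ are linearly dependent. Then $vol(\bm{V}_1,\ldots,\bm{V}_{n-1},\bm{U})=0$ for every $\bm{U}$, by alternation of $vol$, so $\mathcal{J}=\bm{0}$. Conversely, if the $\bm{V}_i$ are independent, I complete them by a vector $\bm{U}$ into a basis. A nonzero $n$-form does not vanish on a basis, and $vol\neq 0$ because it equals $1$ on orthonormal bases. Hence $\mathcal{J}^*\bm{U}\neq 0$, and so $\mathcal{J}\neq\bm{0}$.

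For $\heartsuit$, I take $\bm{U}=\bm{V}_i$. The value $\bm{G}(\mathcal{J},\bm{V}_i)=\mathcal{J}^*\bm{V}_i=vol(\bm{V}_1,\ldots,\bm{V}_{n-1},\bm{V}_i)$ has a repeated argument, so it is zero by alternation.

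For $\spadesuit$, set $\bm{A}\bm{V}=\mathcal{J}(\bm{V}_1,\ldots,\bm{V}_{n-2},\bm{V})$, which is linear in $\bm{V}$. Skew-adjointness comes from
\[
\bm{G}(\bm{A}\bm{V},\bm{U})=vol(\bm{V}_1,\ldots,\bm{V}_{n-2},\bm{V},\bm{U})=-vol(\bm{V}_1,\ldots,\bm{V}_{n-2},\bm{U},\bm{V})=-\bm{G}(\bm{A}\bm{U},\bm{V})=-\bm{G}(\bm{U},\bm{A}\bm{V}),
\]
which gives $\bm{A}^*=-\bm{A}$.

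The statement ``one-to-one'' is the delicate point and is the main obstacle. The map $\bm{A}$ cannot be injective on all of $\mathcal{T}$, because $\bm{A}\bm{V}_i=\bm{0}$ for each $i$. I will therefore read the claim as holding when $\bm{V}_1,\ldots,\bm{V}_{n-2}$ are linearly independent, with $\bm{A}$ one-to-one from the orthogonal complement $W$ of $\mathrm{span}(\bm{V}_1,\ldots,\bm{V}_{n-2})$ onto itself. The image lies in $W$ by $\heartsuit$. The kernel is exactly $\mathrm{span}(\bm{V}_i)$ by $\diamondsuit$. Restricted to $W$, the map is injective provided $W\cap\mathrm{span}(\bm{V}_i)=\{\bm{0}\}$, that is, provided the span is nondegenerate for $\bm{G}$. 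In the degenerate case I would have to fall back on the quotient $\mathcal{T}/\mathrm{span}(\bm{V}_i)$, so the precise hypothesis should be checked against \S 26.B of \cite{CL}.
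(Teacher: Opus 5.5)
The paper itself gives no proof of this theorem; it only refers to \S 26.B of Souriau's \emph{Calcul lin\'eaire}. So there is no in-paper argument to compare with. Your proofs of $\diamondsuit$, $\heartsuit$ and of skew-adjointness are correct, and they are the natural ones.

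There is one slip in your displayed chain. Its last equality should read $-\bm{G}(\bm{V},\bm{A}\bm{U})$, by symmetry of $\bm{G}$, not $-\bm{G}(\bm{U},\bm{A}\bm{V})$. As written, combined with $\bm{G}(\bm{A}\bm{V},\bm{U})=\bm{G}(\bm{U},\bm{A}\bm{V})$, it would force $\bm{A}=0$. The penultimate identity $\bm{G}(\bm{A}\bm{V},\bm{U})=-\bm{G}(\bm{A}\bm{U},\bm{V})$ is already exactly $\bm{A}^*=-\bm{A}$.

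On ``one-to-one'', you are right that the literal claim cannot be proved. For $n\geq 3$ the map $\bm{A}$ always has rank at most $2<n$; the paper's own formula for $\mathcal{J}(\Pi_1,\Pi_2)$ in $E_{1,3}$ annihilates $\Pi_1$ and $\Pi_2$. Your repaired statement is correctly argued: the kernel is $\mathrm{span}(\bm{V}_i)$, the image is the $2$-dimensional $W$, the restriction $W\to W$ is a bijection exactly when the span is nondegenerate, and in general $\mathcal{T}/\mathrm{span}(\bm{V}_i)\to W$ is an isomorphism.

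However, this is probably not the intended content. The remark right after the theorem says ``the linear map $u\mapsto j(u)$ is one-to-one from $\mathbb{R}^3$ into the space of $3\times 3$ skew-symmetric matrices''. That is the case $n=3$ of $\spadesuit$, and it shows that injectivity refers to the dependence on the arguments $(\bm{V}_1,\ldots,\bm{V}_{n-2})$, not on $\bm{V}$.

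Since $\mathcal{J}$ is alternating multilinear, for general $n$ the meaningful statement is this: the induced linear map from $\bigwedge^{n-2}\mathcal{T}$ to the skew-adjoint endomorphisms of $\mathcal{T}$ is injective, hence bijective, because both spaces have dimension $\binom{n}{2}$. This is easy to prove with an orthonormal basis $(\bm{e}_k)$, $\bm{G}(\bm{e}_k,\bm{e}_k)=\varepsilon_k$.
\begin{itemize}
\item Take an index set $\{i_1<\cdots<i_{n-2}\}$ with complement $\{a,b\}$.
\item Directly from the definition, $\mathcal{J}(\bm{e}_{i_1},\ldots,\bm{e}_{i_{n-2}})$ kills every $\bm{e}_{i_k}$ and sends $\bm{e}_a\mapsto\pm\varepsilon_b\bm{e}_b$ and $\bm{e}_b\mapsto\mp\varepsilon_a\bm{e}_a$.
\item Different index sets therefore give maps with disjoint matrix supports, which are linearly independent.
\end{itemize}
At the level of tuples, $\diamondsuit$ gives directly that $\mathcal{J}(\bm{V}_1,\ldots,\bm{V}_{n-2})\neq 0$ if and only if the $\bm{V}_i$ are linearly independent.

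You should include this reading, or both readings. As it stands, your treatment of $\spadesuit$ proves a true statement, but likely not the one the 3-dimensional remark has in mind.
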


In the particular case of the classical positive Euclidean space of dimension 3 ($n = p = 3$), we use the notation $j$ instead of $\mathcal{J}$ and we recover the cross product $j(u,v) = u \times v$ and the linear map $u \mapsto j(u)$ is one-to-one from $\mathbb{R}^3$ into the space of $3 \times 3$ skew-symmetric matrices. In the sequel, the notation $\mathcal{J}$ will be reserved to the dimension 4. In $E_{1, 3}$ equipped with the metric of Gram's matrix $G = diag(1, -1, -1, -1)$, the vector product of 3 vectors 
$$ \Pi_k = \left[ {{\begin{array}{cc}
       m_k \hfill  \\
       p_k  \hfill  \\
   \end{array} }} \right], \qquad m_k \in \mathbb{R}, \quad p_k \in \mathbb{R}^3 \qquad (1 \leq k \leq 3)
$$ 
itemizes as
\begin{equation}
   \mathcal{J}(\Pi_1, \Pi_2, \Pi_3) =  \left[ {{\begin{array}{cc}
       vol (p_1, p_2, p_3) \hfill  \\
        m_1 j (p_2, p_3) - m_2 j (p_1, p_3) + m_3 j (p_1, p_2) \hfill  \\
   \end{array} }} \right]   
\label{J (Pi_1, Pi_2, Pi_3) =}
\end{equation}
from which we deduce 
\begin{equation}
   \mathcal{J}(\Pi_1, \Pi_2) =  \left[ {{\begin{array}{cc}
       0            \hfill &  j(p_1, p_2)^T                   \hfill  \\
        j(p_1, p_2) \hfill &  m_1 j (p_2) - m_2 j (p_1) \hfill  \\
   \end{array} }} \right]   
\label{J (Pi_1, Pi_2) =}
\end{equation}

\begin{figure}[ht!]
\centering
\includegraphics[scale=.50]{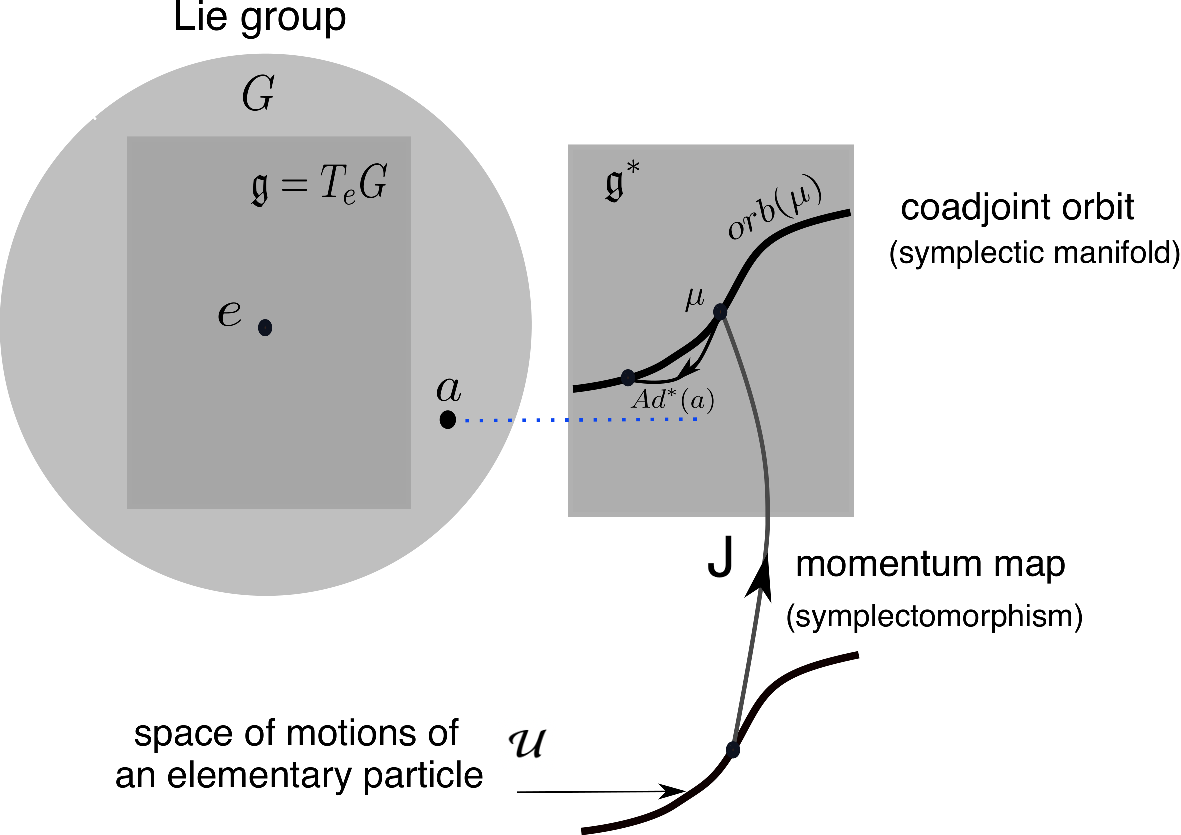}
\caption{Coadjoint orbit method}
\label{fig Coadjoint orbit method}
\end{figure}

\section{The coadjoint orbit method}
\label{Section - The coadjoint orbit method}

This method, the basis of the geometric quantization, was developed by Jean-Marie Souriau in his book "Structure of Dynamical Systems" \cite{SSD, SSDEng} to which the reader is referred for more details. The gist of the method is explained in Figure \ref{fig Coadjoint orbit method}. On the left, we have a symmetry group, the Lie group $G$ and its Lie algebra $\mathfrak{g}$, \textit{i.e.} the tangent space at the unity.
On the right, we have the dual $\mathfrak{g}^*$ of the Lie algebra on which act the elements of the group by the coadjoint representation. The orbit of a momentum $\mu$ has the structure of symplectic manifold.
The momentum map $\mathsf{J}$ is a symplectomorphism (\textit{i.e.} an isomorphism of the symplectic structure) from the space of motions $\mathcal{U}$ of an elementary particle into this orbit.
By classifying coadjoint orbits, one classifies the particles known to physicists, whether elementary such as quarks and leptons or composite such as hadrons.

\vspace*{0.5cm}

The \textbf{Poincaré group} $\mathbb{G} = \mathbb{R}^4 \rtimes \mathbb{SO} (1,3) $ is the set of affine transformations $a = (C, P)$ of the 4D space $E_{1, 3}$ conserving the Minkowski metric, composed of a translation $C$ and a linear part $P$ which is a Lorentz transformation
$$ P^* P = 1_{\mathbb{R}^4}
$$
$\mathbb{G}$ is a Lie group of dimension $10$. The elements of the Lie algebra have a skew-adjoint linear part $\delta P$
$$ \delta a = (\delta C, \delta P) \in  \mathfrak{g}  
 \qquad   \Leftrightarrow \qquad 
 \delta P^* = - \delta P
$$
The corresponding momenta $\mu$ are linear forms on $\mathfrak{g}$, of components the linear 4-momentum $\Pi$ and the angular 4-momentum $M$
\begin{equation}
    \mu (Z) = - \Pi^* \delta C  - \frac{1}{2} Tr (M \, \delta P)
\label{mu (Z) = - Pi delta C - (1/2) Tr (M delta P)}
\end{equation}
characterized by
$$ \mu \in \mathfrak{g}^*   
\qquad   \Leftrightarrow  \quad 
\mu = (\Pi, M) \quad \mbox{such that} \quad  M= - M^*
$$
The coadjoint representation reads
\begin{equation}
     \mu = Ad (a) \, \mu' \quad \Leftrightarrow \quad 
 \Pi = P \, \Pi', \quad 
 M = P \, M' P^* + C \, (P \, \Pi')^* - (P \, \Pi') \, C^*
\label{mu = Ad (a) mu'}
\end{equation}

\begin{figure}[ht!]
\centering
\includegraphics[scale=.60]{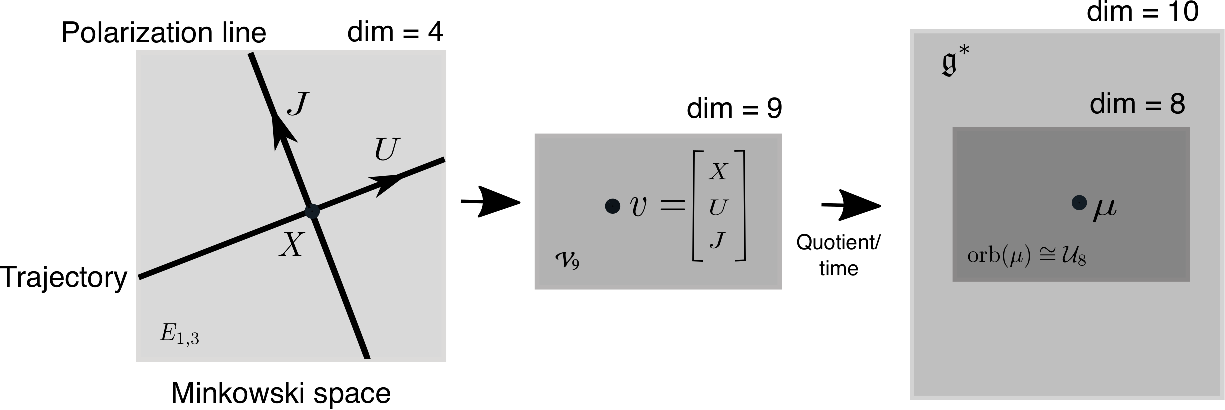}
\caption{Coadjoint orbit of a particle with spin for Poincar\'e group}
\label{fig Particle with spin Poincare}
\end{figure}

In Chapter 3, \S 14 of \cite{SSD, SSDEng}, Souriau uses the type (timelike or lightlike) of the 4-momentum vector (or energy-momentum vector) $\Pi$ and obtains a large part of physicists' classification of elementary particles. In the sequel, we are only interested by the case where $\Pi$ is {\bf timelike}. 

For the {\bf relativistic particle with spin}, we define the {\bf spin momentum}
\begin{equation}
    M_0 = M + \Pi \, X^* - X \, \Pi^*
\label{M_0 = M + Pi X^* - X Pi^* spin-momentum def}
\end{equation}
from which we deduce its transformation law
\begin{equation}
    M_0 = P \, M'_0 P^*
\label{transformation law of the spin M_0}
\end{equation}
$M_0$ is skew-adjoint and the set of $X \in \mathbb{R}^4$ such that $M_0 \, \Pi = 0$ is a straight line $\mathcal{D}$ parallel to $\Pi$ representing the particle trajectory (Figure \ref{fig Particle with spin Poincare}). Then $M_0$ does not depends on $X$ when $X$ runs over $\mathcal{D}$. Taking into account that a non null vector orthogonal to a timelike vector is spacelike (see \S 28 of \cite{CL}), we introduce $U, J \in E_{1, 3}$ such that 
\begin{equation}
    \Pi = m_0 \, U, \qquad U^* U = 1,\qquad J^* J = -1, \qquad J^* U = 0
\label{ Pi = m I & I^* I = 1 & J^* J = -1 & J^* I = 0}
\end{equation}
On this basis, we put
\begin{equation}
    M_0 = s \, \Omega \qquad \mbox{with} \quad \Omega = \mathcal{J} (U, J)
\label{M_0 = s J (I, J)}
\end{equation}
where the scalar $s$ is the \textbf{spin} and the vector product  $\Omega$ of $I$ and $J$ has the same transformation law as that of $M_0$
\begin{equation}
    \Omega = P \, \Omega' P^*
\label{transformation law of Omega}
\end{equation}
The momentum $\mu = (\Pi, M)$ is characterized by $\mathcal{D}$  and $J$ or, equivalently, by $X, U, J$. Using the Hodge operator $*$, we can introduce the \textbf{polarization}  (or spin vector) 
$$ W = (*M) \, \Pi
$$
such that $W = s \, m_0 \, J$. Using Kirillov-Kostant-Souriau theorem (\cite{SSD, SSDEng}, (11.34)), Souriau obtains the explicit expression of the symplectic form for the orbit of the relativistic particle of \textbf{rest mass} $m_0$ and \textbf{spin} $s$
\begin{equation}
     \omega_\mathcal{U} (\delta \mu,  d \mu) =
- s \, Tr (\delta \Omega \, \Omega \, d\Omega)
 + m_0 (\delta X^* d U - d X^* \delta U)
\label{Symplectic form for the relativistic particle of spin s}
\end{equation}

As represented in Figure \ref{fig Particle with spin Poincare}, every inertial motion of a particle with spin is defined by $v$, composed of the position in the space-time, $X$, called an event, the $4$-velocity $U$ (carried by the trajectory) and a $4$-vector of polarization $J$ (carried by a polarization line) for a total of $3 \times 4 = 12$ components.
$U$ and $J$ verifying 3 conditions of orthogonality and normalization, the set 
\begin{equation}
   \mathcal{V}_9 = \lbrace v = \left[ {{\begin{array}{cc}
       X \hfill  \\
       U  \hfill \\
       J  \hfill \\
   \end{array} }} \right] \;\; X, U, J \in E_{1, 3} \;\;
   \mbox{such that} \;\;   U^* U = 1,\;\;  J^* J = -1, \;\;  J^* U = 0
   \rbrace
\label{V_9 defi}
\end{equation}
is a manifold of dimension $12 - 3 = 9$. Quotienting by the time, the \textbf{space of motions} $\mathcal{U}_8$ is of dimension $8$ and, by diffeomorphism, the coadjoint orbit as well.
In the dual space of dimension $10$, that of the Poincaré's group, the orbit is defined by $10 - 8 = 2$ independent invariants which may be the rest mass and the spin
$$ m_0 = \sqrt{\Pi^* \Pi}, \qquad
     s = \frac{\sqrt{- W^* W}}{\sqrt{\Pi^* \Pi}}
$$
Then $m_0$ and $s$ are positive. There are other orbits of different dimensions characterizing the spinless particles and the massless particles but which will not be considered in this paper. 

\begin{figure}[ht!]
\centering
\includegraphics[scale=.70]{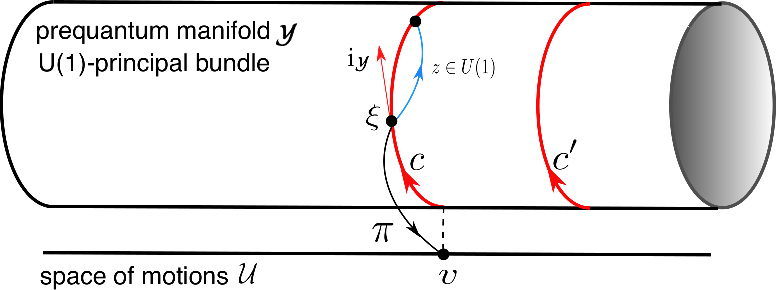}
\caption{Prequantum manifold}
\label{fig Prequantum manifold}
\end{figure}

\section{Geometric prequantization}
\label{Section - Geometric prequantization}

\begin{definition}
Let $\alpha$ a smooth field of 1-forms on a manifold $\mathcal{Y}$. If the following properties are satisfied
\begin{itemize}
\item [$\diamondsuit$] $\mbox{dim} (\mbox{ker} (\mbox{d}\alpha)) = 1$
\item [$\heartsuit$] $\mbox{dim} (\mbox{ker} (\alpha) \cap \mbox{ker} (\mbox{d}\alpha)) = 0$
\end{itemize}
the field  $\alpha$ is said to define a \textbf{contact structure} on $\mathcal{Y}$.
\label{defi contact structure}
\end{definition}
 If we define 
 \begin{equation}
      \omega_\mathcal{Y} = \mbox{d}\alpha
 \label{omega_Y = d varpi}
 \end{equation}
then $\omega_\mathcal{Y}$ is a 2-form whose rank is even. Condition \ref{defi contact structure} $\diamondsuit$ then implies that the dimension of $\mathcal{Y}$ is odd. As $\mbox{dim} (\mbox{ker} (\omega_\mathcal{Y}))$ is constant, $\mathcal{Y}$ is a presymplectic manifold. Its leaves are curves (lines of forces). We introduce the vector $\mbox{i}_\mathcal{Y}$ tangent to the  line of force (Figure \ref{fig Prequantum manifold})
\begin{equation}
     \alpha (\mbox{i}_\mathcal{Y}) = 1 \qquad \mbox{and} \qquad 
\lbrack \forall \delta \xi \in T_\xi \mathcal{Y} , \quad \omega_\mathcal{Y} (\mbox{i}_\mathcal{Y}, \delta \xi) = 0 \rbrack
\label{varpi (i_Y) = 1 and forall delta xi in T_xi Y  omega_Y (i_Y, delta xi) = 0}
\end{equation}
One should also note that
$$     z = e^{i \, s} \in \mathsf{U}(1) \quad \Rightarrow \quad z \cdot \xi = \mbox{exp} (s \, \mbox{i}_\mathcal{Y})
$$
Then every orbit $c$ of $\mathsf{U}(1)$ is a loop.  Because of (\ref{varpi (i_Y) = 1 and forall delta xi in T_xi Y  omega_Y (i_Y, delta xi) = 0}), we have
$$ \int_c \alpha (d \xi) \neq 0
$$
In the sequel, we choose the parameterization in such way that this quantity, called \textbf{action integral} is non negative. Then we can prove the following result

\begin{theorem}
Let $c$ and $c'$ be two orbits of $\mathsf{U}(1)$ on the prequantum manifold $\mathcal{Y}$. Then the action integral is conserved
$$ \forall c, c' \qquad \int_c \alpha (d \xi)  = \int_{c'} \alpha (d \xi) 
$$
\label{thm action conservation)}
\end{theorem}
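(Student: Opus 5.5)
The plan is to compute the action integral directly along a $\mathsf{U}(1)$-orbit using the flow of $\mbox{i}_\mathcal{Y}$, and then argue that it does not depend on the orbit chosen. Take an orbit $c$ through a point $\xi$ and parameterize it by $s \mapsto \xi(s) = \mbox{exp}(s\,\mbox{i}_\mathcal{Y})(\xi)$ for $s \in [0, 2\pi]$, following the relation $z \cdot \xi = \mbox{exp}(s\,\mbox{i}_\mathcal{Y})$ for $z = e^{i s}$. The tangent vector along the loop is then $\mbox{i}_\mathcal{Y}$ evaluated at $\xi(s)$. By the normalization (\ref{varpi (i_Y) = 1 and forall delta xi in T_xi Y  omega_Y (i_Y, delta xi) = 0}) we have $\alpha(\mbox{i}_\mathcal{Y}) = 1$, so
$$ \int_c \alpha(d\xi) = \int_0^{2\pi} \alpha(\mbox{i}_\mathcal{Y})\, ds = 2\pi .$$
The key point is that this value is independent of $c$, and that already proves the statement. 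The sign convention fixed just before the theorem guarantees that the orientation is the one giving the nonnegative value.

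The potential weak point is whether every orbit really closes at exactly $s = 2\pi$. One might worry that the period varies from orbit to orbit, or that some orbits close at a submultiple such as $2\pi/k$ because the action is not free. If we only assume that the flow of $\mbox{i}_\mathcal{Y}$ has closed leaves, this needs a separate argument. For that case I would give a second, homotopy-type proof that works for any family of closed leaves.

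First, connect $c$ to $c'$ through orbits. Choose a path $\xi_\tau$, $\tau \in [0,1]$, from a point of $c$ to a point of $c'$, which is possible since $\mathcal{Y}$ is connected. Then sweep the loops $c_\tau = \mathsf{U}(1)\cdot \xi_\tau$; this gives a map $\Sigma : [0,1] \times [0,2\pi] \rightarrow \mathcal{Y}$ with $\Sigma(\tau, s) = e^{is}\cdot \xi_\tau$.

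Second, apply Stokes's formula on this cylinder, using $\omega_\mathcal{Y} = \mbox{d}\alpha$ from (\ref{omega_Y = d varpi}). This gives
$$ \int_{c'} \alpha - \int_c \alpha = \int_\Sigma \omega_\mathcal{Y} ,$$
since the two side edges $s = 0$ and $s = 2\pi$ coincide and cancel each other.

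Third, show the right-hand side vanishes. At each point of $\Sigma$, the partial derivative $\partial_s \Sigma$ is $\mbox{i}_\mathcal{Y}$, which lies in $\ker \omega_\mathcal{Y}$ by (\ref{varpi (i_Y) = 1 and forall delta xi in T_xi Y  omega_Y (i_Y, delta xi) = 0}). Hence the integrand $\omega_\mathcal{Y}(\partial_\tau\Sigma, \partial_s\Sigma)$ is zero identically.

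The main obstacle is making $\Sigma$ a genuine smooth cylinder. This requires the action to be well defined and continuous, and it requires orbits with nontrivial stabilizer to be traversed consistently, so that each $c_\tau$ is counted with the same multiplicity. I would handle this by always parameterizing by the full circle $s \in [0, 2\pi]$. With that choice, even a loop covered several times contributes coherently, and the comparison goes through.
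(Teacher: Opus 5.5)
Your proposal is correct, and your cylinder argument is the paper's proof. The paper takes a surface $\mathcal{S}$ whose oriented boundary is $c$ together with $c'$ reversed, applies Stokes with $\omega_\mathcal{Y}=\mbox{d}\alpha$, and makes the integrand vanish by letting one parameter of $\mathcal{S}$ run along the $\mathsf{U}(1)$-orbits, so that one argument of $\omega_\mathcal{Y}$ is proportional to $\mbox{i}_\mathcal{Y}$. Your map $\Sigma(\tau,s)=e^{is}\cdot\xi_\tau$ constructs such a surface explicitly, whereas the paper only postulates one. Since Stokes applies to the pull-back of $\alpha$ by a smooth map of a rectangle, the obstacle you flag is not a real one. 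You only need a smooth path $\xi_\tau$, which uses connectedness (the paper tacitly assumes this too), and a smooth action, which is part of $\bowtie$.

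Your first, direct computation is a genuinely different and much shorter route. It is valid if you take literally the relation $e^{is}\cdot\xi=\exp(s\,\mbox{i}_\mathcal{Y})(\xi)$ stated before the theorem, together with freeness. But it proves more than the theorem: it fixes every action integral at $2\pi$, so the Sommerfeld condition $\heartsuit$ would hold automatically with $n=1$. That is not what happens in the paper's own example. For $\psi\in\Sigma_{+}$, the phase rotation $\psi\mapsto e^{i\theta}\psi$ gives $\alpha_\mathcal{W}(\partial_\theta)=2s$, with $s$ now the spin. So the generator of the action is $2s\,\mbox{i}_\mathcal{Y}$, the orbit integral is $4\pi s$, and this is how $s=n/2$ is derived.

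This is why the Stokes argument is the one carrying the content: it only uses that the orbit direction lies in $\ker\omega_\mathcal{Y}$. In your third step, write $\partial_s\Sigma=f\,\mbox{i}_\mathcal{Y}$ for some function $f$ rather than $\partial_s\Sigma=\mbox{i}_\mathcal{Y}$. The justification is that the fibers of $\pi$ are the orbits and $\ker T_\xi\pi$ is spanned by $\mbox{i}_\mathcal{Y}$. Otherwise you silently reuse the normalization you had just called into question.

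Finally, the stabilizer worry is moot because $\bowtie$ makes the action free. Your proposed remedy would not have saved it anyway. If $c'$ had stabilizer $\mathbb{Z}_k$, the full-circle loop would cover $c'$ exactly $k$ times, and the cylinder would give $\int_c\alpha=k\int_{c'}\alpha$ rather than equality.
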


\textbf{Proof.}  Let $\mathcal{S}$ be a submanifold of dimension 2 of which the boundary $\partial \mathcal{S}$ is the union of $c$ and $c'$ in such way that the parameterization of $\partial \mathcal{S}$ is the same as that of $c$ and opposite to that of $c'$. Applying Stokes Formula and taking into account (\ref{omega_Y = d varpi}), it  holds
$$ \int_c \alpha (d \xi)  -  \int_{c'} \alpha (d \xi) = \int_{\partial \mathcal{S}} \alpha (d \xi)  
   = \int_{\mathcal{S}} \omega_{\mathcal{Y}} (d \xi, \delta \xi)  
$$
To integrate, let us choose one of the two parameters of $\mathcal{S}$ being that of an orbit of $\mathsf{U}(1)$ and $d \xi$ tangent to it then proportional to $\mbox{i}_\mathcal{Y}$. Thus, because of (\ref{varpi (i_Y) = 1 and forall delta xi in T_xi Y  omega_Y (i_Y, delta xi) = 0}), we find
$$  \int_c \alpha (d \xi)  -  \int_{c'} \alpha (d \xi) =  0
$$
that achieves the proof. $\blacksquare$

\vspace{0.5 cm}

\begin{definition}
A manifold $\mathcal{Y}$ will be called a \textbf{prequantum manifold} if there exists a smooth field of 1-forms $\alpha$ on $\mathcal{Y}$ which defines a contact structure on $\mathcal{Y}$ and the torus $\mathsf{U}(1)$ acts on $\mathcal{Y}$ in such way that
\begin{itemize}
\item[$\bowtie$] $\mathcal{Y}$ is a a $\mathsf{U}(1)$-principal bundle of base $\mathcal{U}$ and projection $\pi$
\item [$\clubsuit$] $\forall \delta \xi, \quad \omega_\mathcal{Y} (\mbox{i}_\mathcal{Y}, \delta \xi) = 0$
\item [$\spadesuit$] $\alpha (\mbox{i}_\mathcal{Y}) = 1$
\item [$\heartsuit$]  $ \int_c \alpha (d \xi)   =  n \, h, \qquad n \in \mathbb{N}$
\end{itemize}
\label{defi prequantum manifold}
\end{definition}
Based on Theorem \ref{thm action conservation)},  the \textbf{Sommerfeld quantization condition} $\heartsuit$ postulates that the action integral is an integer multiple of the quantum of action of which the value is the Planck constant
$$ h = 2 \, \pi \, \hslash = 6.626 \, 10^{- 34} \,  J. s 
$$
As we are working in natural units, this condition becomes
\begin{equation}
     \int_c \alpha (d \xi)   = 2 \, \pi \, n, \qquad n \in \mathbb{N}
\label{Sommerfeld quantization condition BIS}
\end{equation}

As the lines of force  (which are the orbits of $\mathsf{U}(1)$) are compact, the characteristic foliation of $\mathcal{Y}$ is sectionable, then the set $\mathcal{U}$ of line of forces admits the structure of a symplectic manifold for $\omega_\mathcal{U}$ defined by
$$ \omega_\mathcal{U} (d x , \delta x) = \omega_\mathcal{Y}  (d\xi , \delta \xi) 
$$
where $x = \pi (\xi)$ is the projection of $\mathcal{Y}$ onto $\mathcal{U}$. 

\begin{definition} Conversely, let $\mathcal{U}$ a symplectic manifold representing the \textbf{space of motions of a particle}. We call \textbf{prequantization} of $\mathcal{U}$ the construction of a prequantum manifold $\mathcal{Y}$ and a symplectomorphism $A$ from the base of $\mathcal{Y}$ into $\mathcal{U}$. If we write $x = \mbox{orb} (\xi)$, then
\begin{itemize}
\item $\mathcal{Y}$ is a prequantum manifold
\item $\pi$ is a smooth map from $\mathcal{Y}$ into $\mathcal{U}$
\item $\mbox{ker} (T_\xi \pi)$ is generated by $\mbox{i}_\mathcal{Y}$
\item $x \in \mathcal{U} \quad \Rightarrow \quad \pi^{-1} (x)$ is an orbit of $\mathsf{U}(1)$
\item $\omega_\mathcal{Y}  (d\xi , \delta \xi) = \omega_\mathcal{U} (d x , \delta x) $
\end{itemize}
\label{defi prequantization}
\end{definition}
Thus, prequantizing a symplectic manifold $\mathcal{U}$ is the same as constructing a $\mathsf{U}(1)$-principal bundle $\pi : \mathcal{Y} \rightarrow \mathcal{U}$ satisfying the axioms of Definition \ref{defi prequantization}.

\section{Prequantization of the relativistic particle with  spin 1/2}
\label{Section - Prequantization of the relativistic particle with  spin 1/2}

Now, applying this method, we prequantize the space of motions $\mathcal{U}$ of the \textbf{relativistic particle with spin} described in Section \ref{Section - The coadjoint orbit method}. Especially, we would like to give more details concerning (\cite{SSD, SSDEng}, (18.67) to (18.70). In  the space $\mathcal{E}$ of spinors, the idea is to identify a subset $\Sigma_6$ thereof that can be projected onto the space of couples $(U, J)$ (see Figure \ref{fig Prequantization of the particle with spin with spin 1/2}). To guide us, we use the canonical decomposition (\ref{Gamma = Gamma-i oplus Gamma_(+) oplus Gamma_(-)}) of the space $\Gamma$ of quaternionic matrices.

\begin{figure}[ht!]
\centering
\includegraphics[scale=.60]{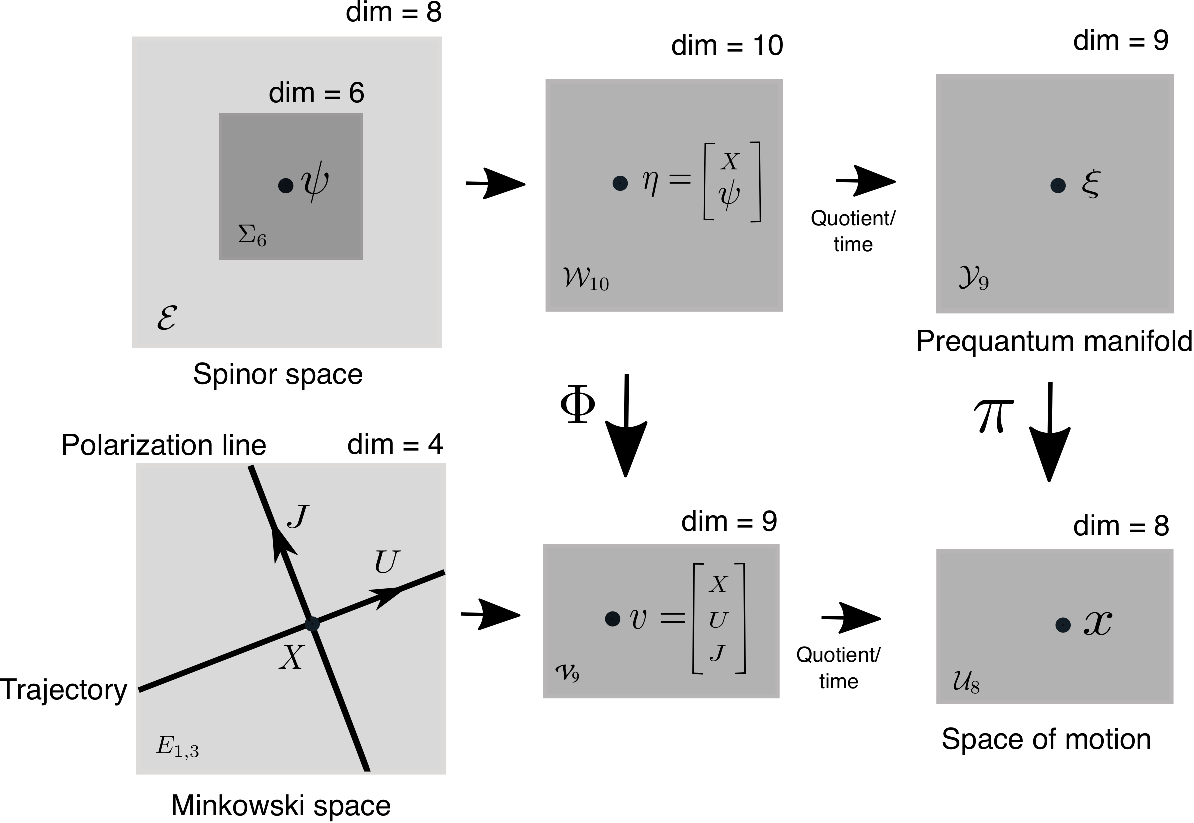}
\caption{Prequantization of the particle with spin with spin 1/2}
\label{fig Prequantization of the particle with spin with spin 1/2}
\end{figure}

\textbf{Step 1.} Interpreting the spinor as a state vector of the quantum mechanics, we would like to normalize it having a glance at the subspace  $\Gamma_i$ of isotropic matrices, spanned by the basis matrix $1_{\mathbb{C}^4}$. Because it is Hermitian, the scalar $\psi^\dag 1_{\mathbb{C}^4} \psi = \psi^\dag \psi $ is real, then we impose the \textbf{normalization condition} 
\begin{equation}
     \psi^\dag \psi = \mid \psi' \mid^2 - \mid \psi'' \mid^2 =  \epsilon, \qquad \epsilon = \pm 1
\label{normalization condition}
\end{equation}
Morover, according to (\cite{SSD, SSDEng}, (18.65)), we impose the condition
\begin{equation}
     \psi^\dag \gamma_5 \, \psi  = - 2 \, \Re (\psi'^\dag \psi'' )  = 0
\label{psi^dag gamma_5 psi = 0}
\end{equation}
Then the manifold of dimension 6
$$\Sigma_6 = \left\lbrace \psi \in \mathbb{C} ^4\quad \mbox{such that} \quad \psi^\dag \psi  = \epsilon 
     \quad \mbox{and} \quad \psi^\dag \gamma_5 \, \psi = 0 \right\rbrace
$$ 
is decomposed into two disjoint connected components
$$ \Sigma_{+}= \left\lbrace \psi \in \Sigma_6 \;\; \mbox{such that} \;\;  \psi^\dag \psi  = 1 \right\rbrace, \qquad
     \Sigma_{-} = \left\lbrace \psi \in \Sigma_6 \;\; \mbox{such that} \;\;  \psi^\dag \psi  = -1 \right\rbrace
$$
Unlike Souriau in (\cite{SSD, SSDEng}, (18.65)), we do not exclude {\it a priori} $\Sigma_{-}$.

\textbf{Step 2.}  Because the matrices $\gamma$ of the subspace $\Gamma_{+}$ are Hermitian, the scalar $\psi^\dag \gamma \, \psi$ is real. As the values of the map (\ref{gamma from R^5 into Gamma_(+)}) belong to $\Gamma_{+}$, according to Souriau in (\cite{SSD, SSDEng}, (18.68))$_1$), we can associate to every spinor $\psi$ a 4-vector $I$ such that
\begin{equation}
    \forall dX \in \mathbb{R}^4, \qquad I^* dX = \psi^\dag \gamma (dX) \, \psi 
\label{I^* dX = psi^dag gamma (dX) psi}
\end{equation}
Alternatively,  introducing the Hermitian matrices
\begin{equation}
      \gamma^j = G^{jm} \gamma _m
\label{gamma^j = G^(jm) gamma_m}
\end{equation}
where $G^{ij} $ are the components of the matrix $G^{-1}$, the components of $I$ are
\begin{equation}
       I^k = \psi^\dag \gamma^k \psi
\label{I^j = psi^dag gamma^j psi}
\end{equation}
As discussed in Section \ref{Section - Quantization of the relativistic particle with  spin 1/2}, we shall recover with this formula the classical concept of probability current (\cite{Bjorken 1964}., p. 9). In particular, for the Minkowski metric (\ref{Minkowski metric defi X^*X =}), it holds
$$ I^0 = \psi^\dag \gamma_0 \, \psi, \qquad
     I^k =  - \psi^\dag \gamma_k \, \psi\quad (k = 1, 2, 3) 
$$
Let $U$ be a \textbf{futur-directed 4-vector} such that 
\begin{equation}
    U^* U = 1
\label{U^* U = 1}
\end{equation}
and $\psi$ be an \textbf{eigenvector} of $\gamma(U)$. As this matrix is Hermitian, the corresponding \textbf{eigenvalue} $\epsilon_U$ is real. Taking into account (\ref{(gamma(X))^2 = (X^* X ) 1_(C^4)}), we have 
\begin{equation}
    \epsilon_U = \pm \, U^* U = \pm 1
\label{epsilon_U = pm U^* U = pm 1}
\end{equation}
Owing to (\ref{psi^dag gamma(U) psi > 0 if U is future-directed}) and (\ref{normalization condition}), it holds
$$ \psi^\dag \gamma (U) \, \psi  = \epsilon_U \, \psi^\dag \psi = \epsilon_U \epsilon > 0
$$
Consequently, $\epsilon_U$ and $\epsilon$ have the same sign, then, owing to (\ref{normalization condition}) and (\ref{epsilon_U = pm U^* U = pm 1})
$$\epsilon_U = \epsilon
$$
and
\begin{equation}
    \gamma (U) \, \psi = \epsilon \, U
\label{gamma (U)  psi = epsilon psi}
\end{equation}

\textbf{Step 3.}  Because the matrices $\gamma$ of the subspace $\Gamma_{-}$ are anti-Hermitian, the scalar $\psi^\dag \gamma \, \psi$ is imaginary. As the matrices $\gamma_\alpha \gamma_5$ belong to $\Gamma_{-}$ and taking into account (\ref{gamma from R^5 into Gamma_(+)}), according to Souriau in (\cite{SSD, SSDEng}, (18.68))$_2$), we can associate to every spinor $\psi$ a 4-vector $J$ such that
\begin{equation}
        \forall dX \in \mathbb{R}^4, \qquad J^* dX = i \,\psi^\dag \gamma (dX) \, \gamma_5 \,  \psi 
\label{J^* dX = psi^dag gamma (dX) gamma_5 psi}
\end{equation}
Alternatively,  the components of $J$ are given by
$$       J^k = i \,\psi^\dag \gamma^k \gamma_5 \, \psi
$$
a formula which, although similar to (\ref{I^j = psi^dag gamma^j psi}), is not generally mentioned in classical textbook but will give rise also to a conservation identity as shown in Section \ref{Section - Quantization of the relativistic particle with  spin 1/2}.

\vspace{0.5cm}

Before to go any further, owing to the decomposition (\ref{psi = (psi' psi'')})  of $\psi$ into half-spinors
$$  \psi = \left[ {{\begin{array}{c}
           \psi'                           \hfill  \\
           \psi''                           \hfill  \\
   \end{array} }} \right] 
$$
it is useful to verify that Formulae (\ref{I^* dX = psi^dag gamma (dX) psi}) and (\ref{J^* dX = psi^dag gamma (dX) gamma_5 psi})
give respectively the expressions of the components of $I$ and $J$
\begin{equation}
   I^0 = \mid \psi' \mid^2 + \mid \psi'' \mid^2, \qquad
   I^k = 2 \Im (\psi''^\dag \sigma_k \psi') \quad (k = 1, 2, 3) 
\label{I^0 = & I^k =}
\end{equation}
\begin{equation}
   J^0 = 2 \Im (\psi''^\dag \psi'), \qquad
   J^k = - (\psi'^\dag \sigma_k \psi' + \psi''^\dag \sigma_k \psi''),  \quad (k = 1, 2, 3) 
\label{J^0 = & J^k =}
\end{equation}

\vspace{0.5cm}

On this basis, we can prove the following result

\begin{theorem}
If the $U$ is a \textbf{futur-directed 4-vector} such that $U^* U = 1$,
there is equivalence between the two propositions:
\begin{itemize}
\item [$\diamondsuit$]  There exists a spinor $\psi \in \Sigma_6$, \textbf{eigenvector} of $\gamma(U)$ such that the 4-vectors $I$ and $J$ are respectively defined by (\ref{I^* dX = psi^dag gamma (dX) psi}) and (\ref{J^* dX = psi^dag gamma (dX) gamma_5 psi})
\begin{equation}
       \forall dX \in \mathbb{R}^4, \qquad I^* dX = \psi^\dag \gamma (dX) \, \psi 
\label{I^* dX = psi^dag gamma (dX) psi BIS}
\end{equation}
\begin{equation}
       \forall dX \in \mathbb{R}^4, \qquad J^* dX = i \,\psi^\dag \gamma (dX) \gamma_5 \psi 
\label{J^* dX = psi^dag gamma (dX) gamma_5 psi BIS}
\end{equation}
\item [$\heartsuit$] The 4-vectors $I$ and $J$ verify 
\begin{equation}
     I = U,  \qquad U^* U = 1,\qquad J^* J = -1, \qquad J^* U = 0
\label{I = U & U^* U = 1 & J^*J = -1 & J^*U = 0}
\end{equation}
\end{itemize}
\label{thm link psi <-> (I  J)}
\end{theorem}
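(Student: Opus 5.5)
The plan is to prove both directions by reducing, through Lorentz covariance, to the rest frame $U = e_0 = (1,0,0,0)$. There the eigenvector condition kills one of the half-spinors, and the explicit formulas (\ref{I^0 = & I^k =})--(\ref{J^0 = & J^k =}) make everything computable.

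\emph{$\diamondsuit \Rightarrow \heartsuit$, the vector $I$.} This part needs no change of frame. Let $\psi \in \Sigma_6$ with $\gamma(U)\psi = \epsilon_U \psi$. As shown in Step 2, $\epsilon_U = \epsilon$, so $\gamma(U)\psi = \epsilon\,\psi$. Since $\gamma(U)$ is Hermitian, also $\psi^\dag \gamma(U) = \epsilon\, \psi^\dag$. Hence for every $dX$,
$$\psi^\dag \gamma(dX)\psi = \epsilon\, \psi^\dag \gamma(dX)\gamma(U)\psi = \epsilon\, \psi^\dag \gamma(U)\gamma(dX)\psi .$$
Averaging the two expressions and using (\ref{gamma(U) gamma(V) + gamma(V) gamma(U) = (U^* V) 1_(C^4)}) gives $\epsilon\,(U^* dX)\,\psi^\dag\psi = \epsilon^2\, U^*dX = U^* dX$. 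Therefore $I = U$.

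\emph{$\diamondsuit \Rightarrow \heartsuit$, orthogonality of $J$.} Similarly,
$$J^* U = i\,\psi^\dag \gamma(U)\gamma_5\psi = i\epsilon\,\psi^\dag\gamma_5\psi = 0$$
by the second defining condition of $\Sigma_6$.

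\emph{$\diamondsuit \Rightarrow \heartsuit$, normalization of $J$.} The identity $J^*J=-1$ is where covariance is needed. By Theorem \ref{Thm S(P) for Lorents transformations}, take a Lorentz transformation $P$ with $P e_0 = U$, and let $S = S(P)$. Then $S$ is unitary, so $S^\dag = S^{-1}$ and $S^\dag \gamma(V) S = \gamma(P^{-1}V)$. Moreover $S$ commutes with $\gamma_5$.

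Put $\psi_0 = S^{-1}\psi$. Then $\psi_0 \in \Sigma_6$ and $\gamma_0\psi_0 = \epsilon\psi_0$. The vectors attached to $\psi_0$ by (\ref{I^* dX = psi^dag gamma (dX) psi BIS})--(\ref{J^* dX = psi^dag gamma (dX) gamma_5 psi BIS}) are $P^{-1}I$ and $P^{-1}J$. Since $P$ preserves the metric, it suffices to treat $U = e_0$.

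In that frame, $\gamma_0\psi_0=\epsilon\psi_0$ forces $\psi''=0$ with $|\psi'|=1$ if $\epsilon=1$, or $\psi'=0$ with $|\psi''|=1$ if $\epsilon=-1$. Formula (\ref{J^0 = & J^k =}) then gives $J^0=0$ and $J^k = -\phi^\dag\sigma_k\phi$ with $\phi$ the surviving unit half-spinor. The classical Pauli identity $\sum_k(\phi^\dag\sigma_k\phi)^2 = |\phi|^4$ gives $J^*J = -1$.

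\emph{$\heartsuit \Rightarrow \diamondsuit$.} Given $U, J$ satisfying (\ref{I = U & U^* U = 1 & J^*J = -1 & J^*U = 0}), complete them to an orthonormal frame $(U, e_1', e_2', J)$. Choose the sign of $e_1'$ so that the linear map $P$ sending $e_0\mapsto U$, $e_3 \mapsto J$ lies in the proper orthochronous group; this is possible because $U$ is future-directed.

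In the rest frame take $\psi_0 = (\psi',0)$ with $\psi' = (0,1)^T$. Then $\psi_0\in\Sigma_6$ with $\epsilon=1$, $\gamma_0\psi_0=\psi_0$, and (\ref{I^0 = & I^k =})--(\ref{J^0 = & J^k =}) give $I=e_0$ and $J=e_3$, since $\sigma_3\psi'=-\psi'$. Setting $\psi = S(P)\psi_0$ and using the covariance above yields an eigenvector of $\gamma(U)$ in $\Sigma_6$ with the prescribed $I=U$ and $J$. If $\Sigma_-$ is wanted instead, use $\psi_0=(0,\psi'')$ in the same way.

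\emph{Main obstacle.} The delicate step is the covariance lemma: checking that $J_{S\psi}=P J_\psi$, that $S\Sigma_6=\Sigma_6$, and that $S$ maps eigenvectors of $\gamma_0$ to eigenvectors of $\gamma(U)$ with the same eigenvalue. This requires combining $S\gamma(V)S^{-1}=\gamma(PV)$ with $S^\dag=S^{-1}$ and $[S,\gamma_5]=0$, and handling the fact that $S$ is only defined up to a phase, which is harmless since every quantity involved is quadratic in $\psi$.
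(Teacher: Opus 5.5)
Your proof is correct. Its backbone is the paper's: you transport $\psi$, $I$, $J$ to the rest frame with $S(P)$, which is unitary and commutes with $\gamma_5$. There the eigenvector condition kills a half-spinor, and $J^*J=-1$ becomes the Pauli identity $\sum_k(\phi^\dag\sigma_k\phi)^2=|\phi|^4$, exactly the paper's computation of $\tilde J^*\tilde J$.

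The genuine difference is in $I=U$ and $J^*U=0$. The paper reads these off the rest-frame components $\tilde I=e_0$ and $\tilde J^0=0$. You derive them in an arbitrary frame from $\gamma(U)\psi=\epsilon\psi$, $\psi^\dag\gamma(U)=\epsilon\psi^\dag$ and the Clifford relation.

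\begin{itemize}
\item Your argument is shorter and shows conceptually why $I$ must equal $U$.
\item Its price is that it takes $\epsilon_U=\epsilon$ as an input from Step 2, hence from the cited positivity of $\psi^\dag\gamma(U)\psi$. In the rest frame that sign can be checked by hand: $\gamma_0\tilde\psi=\tilde\psi$ forces $\tilde\psi''=0$, so $\tilde\psi^\dag\tilde\psi>0$.
\item Your trick also shows that $J^*U=0$ and $\psi^\dag\gamma_5\psi=0$ both follow from the eigenvector property alone, because $\gamma(U)$ anticommutes with $\gamma_5$. So your appeal to the second condition defining $\Sigma_6$ is not actually needed; that condition is automatic for eigenvectors of $\gamma(U)$.
\end{itemize}

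For the converse, the two constructions are equivalent:
\begin{itemize}
\item You use a single proper orthochronous $P$ built from an orthonormal frame $(U,e_1',e_2',J)$, with seed $(0,1,0,0)^T$ giving $J=+e_3$.
\item The paper uses a boost followed by a rotation $Q$, with seed $(1,0,0,0)^T$ giving $J=-e_3$.
\end{itemize}
The paper's factorization makes $S(P)\,S(Q)$ explicit through the standard form of $S(P)$. Yours needs only the existence part of the theorem on $S(P)$.
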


\textbf{Proof.}  
\textbf{(i) Let us demonstrate that} $\; \diamondsuit \; \Rightarrow \;\heartsuit$. Indeed, $U$ is of the form
$$ U = \left[ {{\begin{array}{c}
           \gamma _u                       \hfill  \\
           \gamma_u u                       \hfill  \\
   \end{array} }} \right], \qquad \mbox{with} \quad u \in \mathbb{R}^3, \quad \gamma_u = (1 - \parallel u \parallel^2)^{-1/2}
$$
then, applying a Lorentz transformation of boost  $- u$  to $U$, we obtain  $\tilde{U} = P^{-1} U$ such that  
\begin{equation}
    U  = P\, \tilde{U}  =  \left[ {{\begin{array}{cc}
        \gamma _u    \hfill &  \gamma_u u^T    \hfill  \\
       \gamma_u u  \hfill &    1_{\mathbb{R}^4} + \frac{(\gamma_u)^2}{\gamma_u + 1}  \, u \, u^T\hfill  \\
   \end{array} }} \right] 
\left[ {{\begin{array}{c}
           1                       \hfill  \\
           0                       \hfill  \\
   \end{array} }} \right]
\label{U = P tilde(U) with P is Lorentz of boost u}
\end{equation}
By the choice of a suitable orthonormal basis, of $E_{1,3}$, as the column $\psi$ represents a spinor of the Hermitian space $\mathcal{E}$ and the matrix  $\bm{\gamma} (U)$  represents an element  of $End(\mathcal{E})$, according to Theorem \ref{Thm S(P) for Lorents transformations} with $S = S(P)$ given by (\ref{standard form of S(P)}) where
$$ r = \sqrt{\frac{1}{2} \, (\gamma_u - 1)}, \qquad 
     E = \frac{1}{\parallel u \parallel} \, (u_1 I + u_2 J + u_3 K), \qquad 
     F = 1_{\mathbb{C}^2}
$$
one has 
\begin{equation}
     \tilde{\psi}  = S^{-1} \psi = S^\dag \psi, \qquad
     \gamma (\tilde{U} ) = S^{-1} \gamma (U) \, S = S^\dag \gamma (U) S
\label{tilde(psi) = S^(-1) psi = S^dag psi & gamma(tilde(U)) =}
\end{equation}
then
$$ \forall d\tilde{X}  = P^{-1} dX \in \mathbb{R}^4, \qquad 
     \tilde{\psi} ^\dag \gamma (d\tilde{X} ) \, \tilde{\psi} 
      = (S^\dag \psi)^\dag (S^\dag \gamma (P^{-1} dX) \, S) \, (S^\dag \psi)
       = \psi^\dag \gamma (dX) \, \psi 
$$
Consequently, $\tilde{I}  = P^{-1} I$ is such that (\ref{I^* dX = psi^dag gamma (dX) psi BIS}) is equivalent to
\begin{equation}
    \forall d\tilde{X}   \in \mathbb{R}^4, \qquad \tilde{I} ^* d\tilde{X}  = \tilde{\psi} ^\dag \gamma (d\tilde{X} ) \, \tilde{\psi} 
\label{I^* dX = psi^dag gamma (dX) psi TER}
\end{equation}
Likewise, we can prove that $\tilde{J}  = P^{-1} J$ is such that (\ref{J^* dX = psi^dag gamma (dX) gamma_5 psi BIS}) is equivalent to 
\begin{equation}
    \forall d\tilde{X}  \in \mathbb{R}^4, \qquad \tilde{J} ^* d\tilde{X}  
           = \tilde{\psi} ^\dag \gamma (d\tilde{X} ) \, \gamma_5 \, \tilde{\psi} 
\label{J^* dX = psi^dag gamma (dX) gamma_5 psi TER}
\end{equation} 
and, owing to (\ref{I^0 = & I^k =}), that $ \gamma (\tilde{U} ) = \gamma_0$ and $\tilde{\psi} $ is an eigenvector of $\gamma_0$ of same eigenvalue $\epsilon$ as $\psi$
$$ \gamma_0 \, \tilde{\psi}  = \epsilon \, \tilde{\psi} 
$$
One verifies that the eigenspace of $\epsilon = 1$ is of dimension 2 and its elements have a null lower half-spinor then, owing to (\ref{normalization condition})
\begin{equation}
  \gamma_0 \, \tilde{\psi}  = \tilde{\psi}  \;\; \mbox{and} \;\; \tilde{\psi} \in \Sigma_{+} \quad \Leftrightarrow \quad 
\tilde{\psi}  = \left[ {{\begin{array}{c}
           \psi'                  \hfill  \\
           0                       \hfill  \\
   \end{array} }} \right] \;\; 
   \mbox{with} \;\;  \psi' = \left[ {{\begin{array}{c}
           z'_1                  \hfill  \\
           z'_2                  \hfill  \\
   \end{array} }} \right] \;\;  \mbox{and} \;\;  \mid \psi' \mid^2 = 1
\label{gamma_0 tilde (psi) = tilde(psi) & tilde(psi) in Sigma_(+) equivalent to}
\end{equation}
and, using (\ref{I^0 = & I^k =}) and (\ref{J^0 = & J^k =}), we obtain
\begin{equation}
    \tilde{I}  = \left[ {{\begin{array}{c}
           1  \hfill  \\
           0  \hfill  \\
           0  \hfill  \\           
           0  \hfill  \\
   \end{array} }} \right], \qquad
    \tilde{J}  = \left[ {{\begin{array}{c}
           0  \hfill  \\
           - 2 \, \Re (z'_1 \overline{z'_2})               \hfill  \\
              2 \, \Im (z'_1 \overline{z'_2})               \hfill  \\           
           \mid z'_2 \mid^2  -  \mid z'_1 \mid^2  \hfill  \\
   \end{array} }} \right] 
\label{I = (1 0) & J = = (0 n) }
\end{equation}
It is worth to remark that 
$$ \tilde{J} ^* \tilde{J}  = - \lbrack 4 \, \mid z'_1 \overline{z'_2} \mid^2 + (\mid z'_2 \mid^2  -  \mid z'_1 \mid^2  )^2 \rbrack
$$
$$ \tilde{J} ^* \tilde{J}    = - \lbrack 4 \, \mid z'_1 \mid^2 \mid z'_2 \mid^2 
                             +( \mid z'_1 \mid^4  +  \mid z'_2 \mid^4  - 2 \, \mid z'_1 \mid^2 \mid z'_2 \mid^2 ) \rbrack
$$
and, owing to (\ref{normalization condition}), it holds for $\tilde{\psi}  \in \Sigma_{+}$
$$ \tilde{J} ^* \tilde{J}   = - (\mid z'_1 \mid^2  +  \mid z'_2 \mid^2  )^2) = - \mid \psi' \mid^4 = -1
$$
hence, for $\tilde{\psi}  \in \Sigma_{+}$, we have
\begin{equation}
     \tilde{I}  = \tilde{U} ,  \qquad \tilde{U} ^* \tilde{U}  = 1,\qquad \tilde{J} ^* \tilde{J}  = -1, \qquad \tilde{J} ^* \tilde{U}  = 0
\label{I' = U' & U'^* U' = 1 &  J'^*J' = -1 & J'^*U' = 0}
\end{equation}
and applying the boost $P$ to $\tilde{U} $, $\tilde{I} $ and $\tilde{J} $, we prove  the equivalence of (\ref{I' = U' & U'^* U' = 1 &  J'^*J' = -1 & J'^*U' = 0}) and  (\ref{I = U & U^* U = 1 & J^*J = -1 & J^*U = 0})
$$      I = U,  \qquad U^* U = 1,\qquad J^* J = -1, \qquad J^* U = 0
$$
By a similar reasoning, we prove that it is so also for the the eigenvectors of $\epsilon = -1$ of which elements have a null upper half-spinor
$$  \gamma_0 \, \tilde{\psi}  = - \tilde{\psi}  \;\; \mbox{and} \;\; \tilde{\psi} \in \Sigma_{-} \quad \Leftrightarrow \quad 
\tilde{\psi}  = \left[ {{\begin{array}{c}
           0                            \hfill  \\
           \psi''                      \hfill  \\
   \end{array} }} \right] \;\;
   \mbox{with} \;\; \psi'' = \left[ {{\begin{array}{c}
           z''_1                  \hfill  \\
           z''_2                  \hfill  \\
   \end{array} }} \right] \;\;  \mbox{and} \;\;  \mid \psi'' \mid^2 = 1
$$

\vspace{0.5cm}

\textbf{(ii) Conversely} $\; \heartsuit \; \Rightarrow \; \diamondsuit$. The 4-vectors $U, I, J$ being given, we consider the Lorentz transformation defined by (\ref{U = P tilde(U) with P is Lorentz of boost u}) to determine $\tilde{U} = P^{-1} U, \tilde{I} = P^{-1} I, \tilde{J} = P^{-1} J$. Then, taking into account (\ref{I = (1 0) & J = = (0 n) }), $\tilde{I}, \tilde{U}, \tilde{J}$ are of the form
$$    \tilde{I}  = \tilde{U} =  \left[ {{\begin{array}{c}
           1  \hfill  \\
           0  \hfill  \\
   \end{array} }} \right], \qquad
    \tilde{J}  = \left[ {{\begin{array}{c}
           0  \hfill  \\
           n  \hfill  \\
   \end{array} }} \right] \;\; \mbox{with} \;\; n \in \mathbb{R}^3, \;\; \parallel n \parallel = 1
$$
Let us consider the Lorentz transformation
$$ Q =  \left[ {{\begin{array}{cc}
        1    \hfill &  0    \hfill  \\
        0    \hfill &  R    \hfill  \\
   \end{array} }} \right] 
$$
where $R$ is the rotation of axis unit vector $\nu$ and angle $\theta$ which apply the vector $( -e_3)$ of the canonical basis of $\mathbb{R}^3$ onto $n$. Therefore we have
\begin{equation}
      \hat{I} = Q^{-1} \tilde{I} = \tilde{U} =  \left[ {{\begin{array}{c}
           1  \hfill  \\
           0  \hfill  \\
   \end{array} }} \right], \qquad
    \hat{J} = Q^{-1} \tilde{J}  = \left[ {{\begin{array}{c}
           0  \hfill  \\
        - e_3  \hfill  \\
   \end{array} }} \right] 
\label{hat(I) = (1 0) & hat(J) = (0 e_3)}
\end{equation}
According to Theorem \ref{Thm S(P) for Lorents transformations} with $S = S(Q)$ given by (\ref{standard form of S(P)}) where
$$ r = 0, \qquad 
     F = \cos (\theta / 2) \, 1_{\mathbb{C}^2} + \sin (\theta / 2) \, (\nu_1 I + \nu_2 J + \nu_3 K)
$$
taking into account (\ref{I^0 = & I^k =}) and (\ref{J^0 = & J^k =}), we see that $\hat{I}$ and $\hat{J}$ can be obtained by (\ref{I^* dX = psi^dag gamma (dX) psi BIS}) and (\ref{J^* dX = psi^dag gamma (dX) gamma_5 psi BIS}) from
\begin{equation}
 \hat{\psi} = \left[ {{\begin{array}{c}
           1  \hfill  \\
           0  \hfill  \\
           0  \hfill  \\           
           0  \hfill  \\
   \end{array} }} \right] \in \Sigma_{+}
\label{hat(psi) = (1 0 0 0) in Sigma_(+)}
\end{equation}
Hence $I$ and $J$ are respectively provided by   (\ref{I^* dX = psi^dag gamma (dX) psi}) and (\ref{J^* dX = psi^dag gamma (dX) gamma_5 psi}) from
$$ \psi = S(P) \, \tilde{\psi} =  S(P) \, S(Q) \, \hat{\psi}
$$
that achieves the proof. $\blacksquare$

\vspace{0.5cm}

\textbf{Remark 1.} In the proof of (ii), the method is constructive. However it is worth to observe that the spinor $\psi$ is not unique or otherwise the map $\psi \mapsto (I, J)$ defined by (\ref{I^* dX = psi^dag gamma (dX) psi BIS}) and (\ref{J^* dX = psi^dag gamma (dX) gamma_5 psi BIS})  is surjective. If we multiply $\hat{\psi}$ by $z \in \mathsf{U} (1)$, we find once again (\ref{hat(I) = (1 0) & hat(J) = (0 e_3)}). Also, we can obtain (\ref{hat(I) = (1 0) & hat(J) = (0 e_3)}) from
\begin{equation}
    \hat{\psi} = \left[ {{\begin{array}{c}
           0  \hfill  \\
           0  \hfill  \\
           1  \hfill  \\           
           0  \hfill  \\
   \end{array} }} \right] \in \Sigma_{-}
\label{hat(psi) = (0 0 0 1) in Sigma_(-)}
\end{equation}
or any one of its multiples by $z \in \mathsf{U} (1)$. 

\textbf{Remark 2.} It is worth to remark that the key of the demonstration is based on the \textbf{relativistic covariance} of the Formula (\ref{I^* dX = psi^dag gamma (dX) psi BIS}) (resp. (\ref{J^* dX = psi^dag gamma (dX) gamma_5 psi BIS})) resulting from the equivalence with (\ref{I^* dX = psi^dag gamma (dX) psi TER}) (resp. (\ref{J^* dX = psi^dag gamma (dX) gamma_5 psi TER})).

\vspace{0.5cm}

\begin{figure}[ht!]
\centering
\includegraphics[scale=.60]{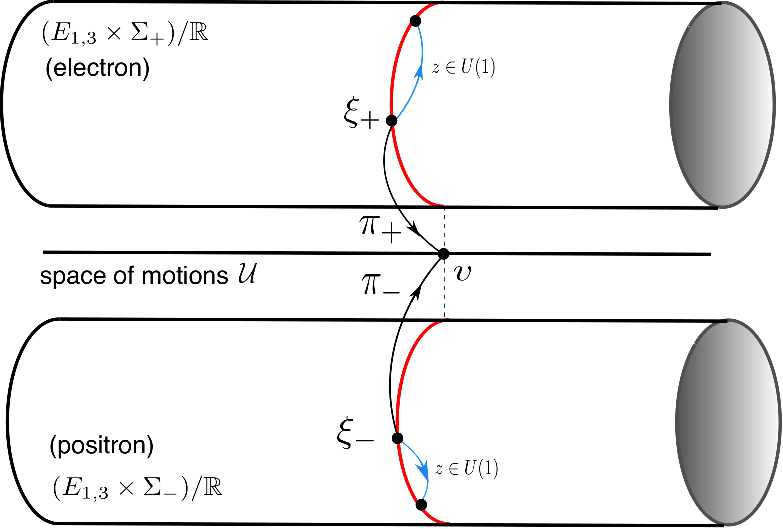}
\caption{Prequantum manifolds of the electron and the positron}
\label{fig Prequantum manifolds of the electron and the positron}
\end{figure}

The next step is to construct the manifold of dimension 10
\begin{equation}
        \mathcal{W}_{10} 
        = E_{1, 3} \times \Sigma_6
        = \lbrace    \;\; \eta = \left[ {{\begin{array}{cc}
       X      \hfill  \\
       \psi  \hfill \\
   \end{array} }} \right] \quad X \in E_{1, 3}, \qquad \psi \in \Sigma
   \;\;  \rbrace
\label{W_(10) =}
\end{equation}
where $\Sigma$ is either $\Sigma_{+}$ or $\Sigma_{-}$ .  Theorem \ref{thm link psi <-> (I  J)} shows there exists a surjective map from  $ \mathcal{W}_{10}$ (which is locally diffeomorphic to $\mathcal{V}_9 \times \mathsf{U}(1)$) to $\mathcal{V}_9$ defined by (\ref{V_9 defi}). It is worth to notice that, in this definition, we do not consider the union $\Sigma_6$ of $\Sigma_{+}$ and $\Sigma_{-}$ which does not allow to construct a $\mathsf{U}(1)$-principal bundle $\mathcal{Y}$ because the action of $\mathsf{U}(1)$ would not be transitive. Hence there are two distinct particles corresponding to the classical one (Figure \ref{fig Prequantum manifolds of the electron and the positron}). As proposed Dirac, they are the electron and the positron. In contrast, Souriau excludes $\Sigma_{-}$ in (\cite{SSD, SSDEng}, (18.65)).

\vspace{0.5cm}

In the left column of Table 2, we gathered eigenvectors of $\gamma_0$ belonging to $\Sigma_6$ which make up a $\mathbb{C}$-basis of the space of spinors and of which we already met two of that in (\ref{hat(psi) = (1 0 0 0) in Sigma_(+)}) and (\ref{hat(psi) = (0 0 0 1) in Sigma_(-)}). The two former belong to $\Sigma_{+}$ then represents a quantum state of the electron while the two later belong to $\Sigma_{-}$ then represents a quantum state of the positron. In the center column, we gives $U = I$ which is identical for the four states and is the 4-velocity of particles at rest in the frame of reference. The right column provides the corresponding 4-vector giving the spin orientation in the direction $x_3$: $-1$ for the first and third states, $+1$ for the second and fourth states.

\begin{center}
\begin{tabular}{l l l l l}
 \hline \hline
  $\psi$ & $U = I \in \mathbb{R}^4$ & $J \in \mathbb{R}^4$ & particle & spin \\
  \hline \hline
   & & & & \\
  $    \left[ {{\begin{array}{c}
           1  \hfill  \\
           0  \hfill  \\
           0  \hfill  \\           
           0  \hfill  \\
   \end{array} }} \right] \in \Sigma_{+}$  &  $\left[ {{\begin{array}{c}
           1 \hfill  \\
           0  \hfill  \\
           0  \hfill  \\           
           0  \hfill  \\
   \end{array} }} \right] $ &  $\left[ {{\begin{array}{c}
           0 \hfill  \\
           0  \hfill  \\
           0  \hfill  \\           
          -1  \hfill  \\
   \end{array} }} \right] $   & electron &    $\downarrow$      \\
    & & & & \\
     $    \left[ {{\begin{array}{c}
           0  \hfill  \\
           1  \hfill  \\
           0  \hfill  \\           
           0  \hfill  \\
   \end{array} }} \right] \in \Sigma_{+}$  &  $\left[ {{\begin{array}{c}
           1 \hfill  \\
           0  \hfill  \\
           0  \hfill  \\           
           0  \hfill  \\
   \end{array} }} \right] $ &  $\left[ {{\begin{array}{c}
           0 \hfill  \\
           0  \hfill  \\
           0  \hfill  \\           
          +1  \hfill  \\
   \end{array} }} \right] $    &  electron &   $\uparrow$      \\
   & & & & \\
     $    \left[ {{\begin{array}{c}
           0  \hfill  \\
           0  \hfill  \\
           1  \hfill  \\           
           0  \hfill  \\
   \end{array} }} \right] \in \Sigma_{-}$  &  $\left[ {{\begin{array}{c}
           1 \hfill  \\
           0  \hfill  \\
           0  \hfill  \\           
           0  \hfill  \\
   \end{array} }} \right] $ &  $\left[ {{\begin{array}{c}
           0 \hfill  \\
           0  \hfill  \\
           0  \hfill  \\           
          -1  \hfill  \\
   \end{array} }} \right] $    & positron &   $\downarrow$      \\
    & & & & \\
     $    \left[ {{\begin{array}{c}
           0  \hfill  \\
           0  \hfill  \\
           0  \hfill  \\           
           1  \hfill  \\
   \end{array} }} \right] \in \Sigma_{-}$  &  $\left[ {{\begin{array}{c}
           1 \hfill  \\
           0  \hfill  \\
           0  \hfill  \\           
           0  \hfill  \\
   \end{array} }} \right] $ &  $\left[ {{\begin{array}{c}
           0 \hfill  \\
           0  \hfill  \\
           0  \hfill  \\           
         +1  \hfill  \\
   \end{array} }} \right] $    & positron  &  $\uparrow$ \\
 & & & & \\
    \hline \hline
\end{tabular}
\end{center}
\begin{center}
\textbf{Table 2:} eigenvectors of $\gamma_0$
\end{center}

To obtain the quantum states with a spin $+ 1$ or $-1$ in the direction of the other axis of the frame of reference, we can use an element  of the spin group generating a spatial rotation (see Theorem \ref{Thm S(P) for Lorents transformations}). 

For instance, the spin $+1$ of the electron in the directions $x_1$ and $x_2$ correspond respectively to the quantum states
$$  \frac{1}{\sqrt{2}} \, \left[ {{\begin{array}{c}
           1  \hfill  \\
        - 1  \hfill  \\
           0  \hfill  \\           
           0  \hfill  \\
   \end{array} }} \right], \; 
 \frac{1}{\sqrt{2}} \,  \left[ {{\begin{array}{c}
           1  \hfill  \\
         - i  \hfill  \\
          0  \hfill  \\           
          0  \hfill  \\
   \end{array} }} \right] \in \Sigma_{+}
$$

\section{Symplectic and contact structures of the prequantum manifold}
\label{Section - Symplectic and contact structures of the prequantum manifold}

First of all, taking into account (\ref{Im (Z^dag Z') - - Im(Z'^dag Z)}), let us remark that 
$$ \omega_\Sigma (d \psi, \delta \psi) = - \Im (d \psi^\dag \delta \psi) 
$$
is a 2-form. Considering (\cite{SSD, SSDEng}, (18.71)), we start with the symplectic structure on $\Sigma_6$ based on the following proposition

\begin{theorem}
If the spinor $\psi$ runs through $\Sigma_6$, we have the identity
\begin{equation}
   - \Im (d \psi^\dag \delta \psi) = \frac{1}{4} \, Tr (d \Omega \, \Omega \, \delta \Omega)
\label{Im((d psi)^dag delta psi) = (1/4) Tr (d Omega Omega delta Omega)}
\end{equation}
with, as in (\ref{M_0 = s J (I, J)}),
\begin{equation}
    \Omega = \mathcal{J} (U, J)
\label{Omega = Omega = J(U, J)}
\end{equation}
\label{thm symplectic structure of Sigma_6}
\end{theorem}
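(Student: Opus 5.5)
Both sides of the identity are invariant under the spin group and the Lorentz group, so I would reduce to a reference point and then compute there.

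\textbf{Step 1: covariance.} Let $\psi \in \Sigma_6$, with $U = I$ and $J$ attached to it by Theorem \ref{thm link psi <-> (I  J)}. Let $S = S(P)$ be given by Theorem \ref{Thm S(P) for Lorents transformations}, and set $\psi = S\hat\psi$. Since $S$ is a constant unitary matrix, $d\psi = S\,d\hat\psi$ and $\delta\psi = S\,\delta\hat\psi$, hence
$$\Im (d\psi^\dag \delta\psi) = \Im (d\hat\psi^\dag \delta\hat\psi).$$
Remark 2 shows that $\hat\psi \mapsto (\hat U,\hat J)$ is carried to $\psi \mapsto (U,J) = (P\hat U, P\hat J)$. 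The vector product $\mathcal{J}$ is built from the metric and the volume form, and $P$ is in $\mathbb{SO}(1,3)$. So (\ref{transformation law of Omega}) gives $\Omega = P\,\hat\Omega\,P^*$, and the same holds for the differentials. Using $P^*P = 1$, we get
$$Tr (d\Omega\,\Omega\,\delta\Omega) = Tr (d\hat\Omega\,\hat\Omega\,\delta\hat\Omega).$$
Both sides are therefore invariant. By the proof of Theorem \ref{thm link psi <-> (I  J)}(ii), it suffices to check the identity at $\hat\psi = (1,0,0,0)^T$ for $\Sigma_+$, and at $(0,0,1,0)^T$ for $\Sigma_-$. At these points $\hat U = (1,0)$ and $\hat J = (0,-e_3)$, but the tangent vectors $d\hat\psi$ and $\delta\hat\psi$ remain arbitrary in $T_{\hat\psi}\Sigma_6$.

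\textbf{Step 2: tangent space at $\hat\psi$.} Write $d\hat\psi = (a',a'')$ with $a', a'' \in \mathbb{C}^2$. Linearizing $\psi^\dag\psi = \epsilon$ and $\Re(\psi'^\dag\psi'') = 0$ at $\hat\psi = (e_1,0)$ gives
$$\Re(a'_1) = 0 \quad\text{and}\quad \Re(a''_1) = 0.$$
This leaves six real parameters: $a'_1 = i\theta$, $a'_2$, $a''_1 = i\beta$, and $a''_2$. Then
$$-\Im(d\hat\psi^\dag\delta\hat\psi) = -\Im(\bar a'\cdot b') + \Im(\bar a''\cdot b''),$$
the sign on the second term coming from $G$.

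\textbf{Step 3: variation of $\Omega$.} I would differentiate (\ref{I^0 = & I^k =}) and (\ref{J^0 = & J^k =}) at $\hat\psi$ to get $dU$ and $dJ$ linearly in $(a',a'')$. Because $\psi''=0$ there, the terms quadratic in $\psi''$ drop out. For instance,
$$dU^0 = 0,\quad dU^k = 2\Im(a''^\dag\sigma_k e_1),\quad dJ^0 = 2\Im(a''^\dag e_1),\quad dJ^k = -2\Re(e_1^\dag\sigma_k a').$$
Then, by bilinearity, $d\Omega = \mathcal{J}(dU,\hat J) + \mathcal{J}(\hat U,dJ)$, and $\hat\Omega$ and $d\Omega$ are written explicitly with (\ref{J (Pi_1, Pi_2) =}).

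\textbf{Step 4: the trace.} I would compute $Tr (d\Omega\,\hat\Omega\,\delta\Omega)$ and compare it with Step 2. Since $\hat\Omega$ is built from $e_0$ and $e_3$, the trace decouples into two blocks. The boost part, in $dU$ and $dJ^0$, comes from $a''$. The rotation part, in $dJ^k$, comes from $a'_2$. The phase direction $\theta$ lies in the kernel of both sides, and the $\beta$ direction is tied to $dJ^0$.

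The main obstacle is this last bookkeeping step: matching the factor $\frac{1}{4}$ and the relative sign between the $\psi'$ and $\psi''$ contributions, which differ because of the indefinite metric. I would first check it on the pure rotation block, where it reduces to $Tr(dj\,j\,\delta j)$ in $\mathbb{R}^3$ against the standard symplectic form of $S^2$. I would then treat the boost block, and finally redo the computation at the $\Sigma_-$ reference point, where the roles of $\psi'$ and $\psi''$ are exchanged.
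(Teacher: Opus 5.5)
\textbf{This is a genuine gap.} You follow the paper's route exactly: covariance reduction, then a check on arbitrary tangent vectors at a reference spinor. But the step you leave as ``bookkeeping'' is where the argument breaks. The identity is false on $\Sigma_{-}$, so your planned final check at $(0,0,1,0)^T$ cannot succeed.

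\textbf{Why it fails on $\Sigma_{-}$.} Consider the map $\psi\mapsto\gamma_5\psi$. It sends $\Sigma_{+}$ onto $\Sigma_{-}$, since it reverses the signs of $\psi^\dag\psi$ and of $\psi^\dag\gamma_5\psi$, and it sends $(1,0,0,0)^T$ to $(0,0,1,0)^T$. Since $\gamma_5$ is Hermitian, anticommutes with $\gamma_0,\dots,\gamma_3$ and squares to $-1_{\mathbb{C}^4}$, both $\psi^\dag\gamma(dX)\psi$ and $i\,\psi^\dag\gamma(dX)\gamma_5\psi$ are unchanged. Hence $U$, $J$ and $\Omega$ are unchanged, which is why Table 2 assigns the same $U,J$ to both spinors. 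On the other hand $\gamma_5^\dag\gamma_5=-1_{\mathbb{C}^4}$, so $\Im(d\psi^\dag\delta\psi)$ changes sign. That form is not identically zero on $T\Sigma_{\pm}$, so the two sides can agree on at most one component, whatever orientation is used for $\mathcal{J}$. The correct statement is $-\Im(d\psi^\dag\delta\psi)=\frac{\epsilon}{4}\,Tr(d\Omega\,\Omega\,\delta\Omega)$ on $\Sigma_{\epsilon}$. The paper's claim that $\Sigma_{-}$ is treated ``in the same way'' has the same defect. Your worry about the relative sign of the $\psi'$ and $\psi''$ contributions was well founded: it is a genuine sign flip.

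\textbf{The computation you postponed.} Write $U=(U^0,u)$ and $J=(J^0,n)$. The blocks of $\Omega$ are $V=u\times n$ and $j(W)$ with $W=U^0n-J^0u$. At $\hat U=e_0$, $\hat J=-e_3$ this gives $dV=e_3\times du$ and $dW=dn$, and
$Tr(d\Omega\,\hat\Omega\,\delta\Omega)=e_3\cdot(dV\times\delta V)-e_3\cdot(dW\times\delta W)$.
\begin{itemize}
\item At $(1,0,0,0)^T$ one finds $dV=-2(\Re a''_2,\Im a''_2,0)$ and $dW=-2(\Re a'_2,\Im a'_2,0)$. Hence $\frac14 Tr=\Im(\overline{a''_2}\,b''_2)-\Im(\overline{a'_2}\,b'_2)$. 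This is your Step 2 expression once the purely imaginary entries $a'_1,a''_1,b'_1,b''_1$ drop out. So $\Sigma_{+}$ works, with the sign of $\mathcal{J}$ given by the paper's explicit formula for $\mathcal{J}(\Pi_1,\Pi_2)$.
\item At $(0,0,1,0)^T$ the roles swap: $dV=2(\Re a'_2,\Im a'_2,0)$ and $dW=-2(\Re a''_2,\Im a''_2,0)$. This gives exactly minus the left side, which does not change because it is fixed by $G$.
\end{itemize}

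\textbf{Smaller points.}
\begin{itemize}
\item $dJ^0$ never enters $d\Omega$ at $\hat U=e_0$, because $\mathcal{J}(e_0,e_0)=0$. So the $\beta$ direction, a boost along the spin axis that preserves the $(U,J)$-plane, lies in the kernel of both sides, just like $\theta$. It is not ``tied to'' the boost block.
\item Your $dJ^0$ has the wrong sign. Linearizing $J^*U=0$ forces $dJ^0=-dU^3=-2\Im(a''^\dag e_1)$. This slip is inherited from the paper's first formula for $J^0$ and is harmless here.
\item To reach every $\psi\in\Sigma_{+}$ you need $\psi=e^{i\theta}S\hat\psi$, i.e.\ that each fibre of $\psi\mapsto(U,J)$ is one $\mathsf{U}(1)$-orbit. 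This follows from the rest-frame form in part (i) of the proof of the theorem linking $\psi$ to $(I,J)$, not from part (ii). The phase factor itself is harmless.
\end{itemize}
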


\textbf{Proof.}  The idea is first to verify the identity when the spinor $\psi$ has the simple value (\ref{hat(psi) = (1 0 0 0) in Sigma_(+)}). Next, as for Formulae (\ref{I^* dX = psi^dag gamma (dX) psi BIS}) and (\ref{J^* dX = psi^dag gamma (dX) gamma_5 psi BIS}), we prove the identity for every $\psi$ in $\Sigma_{+}$ by showing it has the \textbf{relativistic covariance}. As for  $\Sigma_{-}$, we have to proceed in the same way starting with the simple value (\ref{hat(psi) = (0 0 0 1) in Sigma_(-)}). In the sequel, we present the principle of the calculations only for $\Sigma_{+}$.

\vspace{0.5cm}

\textbf{(i) Simple case.}  Introducing the simplified notations
$$ \psi'^\dag \bm{\sigma} \, \psi' = \left[ {{\begin{array}{cc}
         \psi'^\dag \sigma_1 \, \psi'  \hfill  \\           
         \psi'^\dag \sigma_2 \, \psi'  \hfill  \\  
         \psi'^\dag \sigma_3 \, \psi'  \hfill  \\  
   \end{array} }} \right], \quad
    \psi''^\dag \bm{\sigma} \, \psi'' = \left[ {{\begin{array}{cc}
         \psi''^\dag \sigma_1 \, \psi''  \hfill  \\           
         \psi''^\dag \sigma_2 \, \psi''  \hfill  \\  
         \psi''^\dag \sigma_3 \, \psi''  \hfill  \\  
   \end{array} }} \right], \quad  
    \Im (\psi''^\dag \bm{\sigma} \, \psi' )= \left[ {{\begin{array}{cc}
         \Im (\psi''^\dag \sigma_1 \, \psi')  \hfill  \\           
         \Im (\psi''^\dag \sigma_2 \, \psi')  \hfill  \\  
         \Im (\psi''^\dag \sigma_3 \, \psi')  \hfill  \\  
   \end{array} }} \right]
$$
the relation (\ref{I^0 = & I^k =}) and (\ref{J^0 = & J^k =}) can be recast respectively into
\begin{equation}
   U = \left[ {{\begin{array}{cc}
          \mid \psi' \mid^2 + \mid \psi'' \mid^2   \hfill  \\           
         2 \, \Im (\psi'^\dag \bm{\sigma} \, \psi'')  \hfill  \\  
   \end{array} }} \right]
\label{I^0 = & I^k = BIS}
\end{equation}
\begin{equation}
   J = \left[ {{\begin{array}{cc}
          2 \, \Im (\psi'^\dag \psi'')                                                                    \hfill  \\           
          -(\psi'^\dag \bm{\sigma} \, \psi'  + \psi''^\dag \bm{\sigma} \, \psi'') \hfill  \\  
   \end{array} }} \right]
\label{J^0 = & J^k = BIS}
\end{equation}
Hence, owing to (\ref{J (Pi_1, Pi_2) =}), the expression of (\ref{Omega = Omega = J(U, J)}) is
$$  \Omega =  \left[ {{\begin{array}{cc}
        0    \hfill &  V^T    \hfill  \\
        V    \hfill &  j (W)   \hfill  \\
   \end{array} }} \right] \;\; \mbox{with} \;\; 
$$
\begin{equation}
      \left\lbrace \begin{array}{l}
    V = - 2 \, \Im (\psi''^\dag \bm{\sigma} \, \psi' ) \times (\psi'^\dag \bm{\sigma} \, \psi' + \psi''^\dag \bm{\sigma} \, \psi'' )   \\           
    W = - (\mid \psi' \mid^2 + \mid \psi'' \mid^2) \, (\psi'^\dag \bm{\sigma} \, \psi' + \psi''^\dag \bm{\sigma} \, \psi'' )  
            - 4 \,  \Im (\psi'^\dag \psi'')  \, \Im (\psi''^\dag \bm{\sigma} \, \psi' )   \\  
   \end{array} \right.
\label{V = & W  =}
\end{equation}
Taking into account (\ref{I = (1 0) & J = = (0 n) }), we know that for the simple case where the spinor value is (\ref{hat(psi) = (1 0 0 0) in Sigma_(+)}), (\ref{I^* dX = psi^dag gamma (dX) psi BIS}) and (\ref{J^* dX = psi^dag gamma (dX) gamma_5 psi BIS}) provide
\begin{equation}
    \psi = \left[ {{\begin{array}{c}
           1  \hfill  \\
           0  \hfill  \\
           0  \hfill  \\           
           0  \hfill  \\
   \end{array} }} \right], \qquad
    U = \left[ {{\begin{array}{c}
           1  \hfill  \\
           0  \hfill  \\
           0  \hfill  \\           
           0  \hfill  \\
   \end{array} }} \right], \qquad
    J = \left[ {{\begin{array}{c}
           0  \hfill  \\
           0  \hfill  \\
           0  \hfill  \\           
         -1  \hfill  \\
   \end{array} }} \right]       
\label{simple case psi = (1 0 0 0) & U = (1 0 0 0) & J = (0 0 0 1)}
\end{equation}
To avoid managing cumbersome expressions, we introduce again simplified notations
$$ z'  = z'_1 \overline{z'_2}, \quad
     z'' = z''_1 \overline{z''_2}, \quad
     \rho' = \mid \psi' \mid^2 = \mid z'_1 \mid^2 + \mid z'_2 \mid^2, \quad
     \rho'' = \mid \psi'' \mid^2 = \mid z''_1 \mid^2 + \mid z''_2 \mid^2 
$$
$$ \sigma'  = \mid z'_1 \mid^2 - \mid z'_2 \mid^2, \quad
     \sigma''  = \mid z''_1 \mid^2 - \mid z''_2 \mid^2 
$$
Then (\ref{V = & W  =}) gives
\begin{equation}
        V = \left[ {{\begin{array}{c}
           - \Re (\overline{z''_2} z'_1 - \overline{z''_1} z'_2) \, (\sigma' + \sigma'')
           -   2 \, \Im  (\overline{z''_1} z'_1 - \overline{z''_2} z'_2)  \,  \Im (z' + z'') \hfill  \\
           - 2 \, \Im  (\overline{z''_1} z'_1 - \overline{z''_2} z'_2)  \, \Re (z' + z'')  
           +  \Im (\overline{z''_2} z'_1 + \overline{z''_1} z'_2) \,  (\sigma' + \sigma'') \hfill  \\
             2 \,  \Im (\overline{z''_2} z'_1 + \overline{z''_1} z'_2) \, \Im (z' + z'') 
           + 2 \, \Re (\overline{z''_2} z'_1 - \overline{z''_1} z'_2) \, \Re  (z' + z'')      \hfill  \\           
   \end{array} }} \right]
\label{V =}
\end{equation}
\begin{equation}
     W = (\rho' + \rho'') \, 
       \left[ {{\begin{array}{c}
           - 2 \, \Re (z' + z'') \hfill  \\
            2 \, \Im (z' + z'')  \hfill  \\
           - (\sigma' + \sigma'')  \hfill  \\           
   \end{array} }} \right] 
   + 4 \, \Im  (\overline{z''_1} z'_1 + \overline{z''_2} z'_2)  \, 
       \left[ {{\begin{array}{c}
          \Im (\overline{z''_2} z'_1 + \overline{z''_1} z'_2)  \hfill  \\
          \Re (\overline{z''_2} z'_1 - \overline{z''_1} z'_2)   \hfill  \\
           \Im  (\overline{z''_1} z'_1 - \overline{z''_2} z'_2)  \hfill  \\           
   \end{array} }} \right]    
\label{W =}
\end{equation}
For the simple case (\ref{simple case psi = (1 0 0 0) & U = (1 0 0 0) & J = (0 0 0 1)}), we have
\begin{equation}
     z'_1 = 1, \qquad z'_2 = z''_1 = z''_2 = 0
\label{z'_1 = 1 & z'_2 = z''_1 = z''_2 = 0}
\end{equation}
then (\ref{V =}) and (\ref{W =}) provide
\begin{equation}
    V = \left[ {{\begin{array}{c}
           0  \hfill  \\
           0  \hfill  \\
           0  \hfill  \\           
   \end{array} }} \right], \qquad
       W = \left[ {{\begin{array}{c}
           0  \hfill  \\
           0  \hfill  \\
        - 1  \hfill  \\           
   \end{array} }} \right]   = - e_3 
\label{simple case V = & W =}
\end{equation}
where $e_3$ denotes the third vector of the canonical basis of $\mathbb{R}^3$. Now we are able to calculate the right hand side of (\ref{Im((d psi)^dag delta psi) = (1/4) Tr (d Omega Omega delta Omega)})
$$ \frac{1}{4} \, Tr (d \Omega \, \Omega \, \delta \Omega) = \frac{1}{4} \, Tr (
\left[ \begin{array}{cc}
        0    \hfill &  dV^T    \hfill  \\
        dV    \hfill &  j (dW)   \hfill  \\
   \end{array}  \right] \,
   \left[ \begin{array}{cc}
        0    \hfill &  V^T    \hfill  \\
        V    \hfill &  j (W)   \hfill  \\
   \end{array}  \right], \,
   \left[ \begin{array}{cc}
        0    \hfill &  \delta V^T    \hfill  \\
        \delta V    \hfill &  j (\delta W)   \hfill  \\
   \end{array}  \right]
)
$$
\begin{equation}
       \frac{1}{4} \, Tr (d \Omega \, \Omega \, \delta \Omega) = \frac{1}{4} \, 
W \cdot (dW \times \delta W - dV \times \delta V) + V \cdot (\delta V \times dW - dV \times \delta W)
\label{Tr (d Omega Omega delta Omega) = W cdot (dW x delta W - dV x delta V) + V cdot (delta V times dW  - dV x delta W}
\end{equation}
Differentiating (\ref{V =}) and (\ref{W =})  at the value (\ref{hat(psi) = (1 0 0 0) in Sigma_(+)}) of the spinor, we obtain for the simple case
$$     dV = \left[ {{\begin{array}{c}
           - 2 \, \Re (\overline{dz''_2})  \hfill  \\
              2 \, \Im (\overline{dz''_2})   \hfill  \\
           0  \hfill  \\           
   \end{array} }} \right], \qquad
       dW = \left[ {{\begin{array}{c}
             - 2 \, \Re (\overline{dz'_2})   \hfill  \\
                2 \, \Im (\overline{dz'_2})   \hfill  \\
             - 4 \, \Im (\overline{dz'_1})  \hfill  \\           
   \end{array} }} \right]    
$$
The expressions of $\delta V$ and $\delta W$ are deduced from the previous one replacing $d$ by $\delta$. For the  simple case, (\ref{Tr (d Omega Omega delta Omega) = W cdot (dW x delta W - dV x delta V) + V cdot (delta V times dW  - dV x delta W}) is reduced to
$$  \frac{1}{4} \,   Tr (d \Omega \, \Omega \, \delta \Omega)  
           = \frac{1}{4} \,  (e_3 \cdot ( dV \times \delta V) - e_3 \cdot (dW \times \delta W))
$$
and we obtain successively
$$  \frac{1}{4} \,  Tr (d \Omega \, \Omega \, \delta \Omega)  
             = - \lbrack Re (\overline{dz'_2}) \, \Im (\delta z'_2) 
             + \Re (\delta z'_2) \, \Im (\overline{d z'_2}) 
              - \Re (\overline{dz''_2}) \, \Im (\delta z''_2) 
             - \Re (\delta z''_2) \, \Im (\overline{d z''_2}) \rbrack
$$
\begin{equation}
     \frac{1}{4} \,  Tr (d \Omega \, \Omega \, \delta \Omega)  
             = - \Im ( \overline{dz'_2} \, \delta z'_2 - \overline{dz''_2} \, \delta z''_2) 
\label{(1/4) Tr (d Omega Omega delta Omega) = Im(bar(dz'_2) delta z'_2 - bar(dz''_2) delta z''_2)}
\end{equation}

On the other hand, we develop the expression of the left hand side of (\ref{Im((d psi)^dag delta psi) = (1/4) Tr (d Omega Omega delta Omega)}) 
\begin{equation}
     - \Im (d \psi^\dag \delta \psi) 
               =   - \Im ( \overline{dz'_1} \, \delta z'_1  + \overline{dz'_2} \, \delta z'_2 
                   - \overline{dz''_1} \, \delta z''_1 - \overline{dz''_2} \, \delta z''_2) 
\label{Im (d psi^dag delta psi) = }
\end{equation}
in which $d \psi$ and $\delta \psi$ are tangent vectors to the manifold $\Sigma_6$ at the point (\ref{hat(psi) = (1 0 0 0) in Sigma_(+)}). Then, differentiating the normalization condition (\ref{normalization condition}), we have
$$  \overline{dz'_1} \, z'_1 + \overline{z'_1} \, dz'_1 
   +  \overline{dz'_2} \, z'_2 + \overline{z'_2} \, dz'_2 
   - \overline{dz''_1} \, z'_1 + \overline{z''_1} \, dz'_1 
   -  \overline{dz''_2} \, z'_2 + \overline{z''_2} \, dz'_2 = 0
$$
that, owing to (\ref{z'_1 = 1 & z'_2 = z''_1 = z''_2 = 0}) is reduced to
$$ dz'_1  + \overline{dz'_1} = 2 \Re (\overline{dz'_1} ) = 0
$$
Likewise, we have
$$ \Re (\delta z'_1) = 0
$$
and consequently
\begin{equation}
    \Im ( \overline{dz'_1} \, \delta z'_1) = \Re (\overline{dz'_1}) \, \Im (\delta z'_1)
                                                                + \Re (\delta z'_1)\, \Im (\overline{dz'_1}) = 0
\label{Im (bar(dz'_1) delta z'_1) = 0}
\end{equation}
Similarly, differentiating the condition (\ref{psi^dag gamma_5 psi = 0}) at the point (\ref{hat(psi) = (1 0 0 0) in Sigma_(+)}), we deduce 
\begin{equation}
    \Im ( \overline{dz''_1} \, \delta z''_1) =  0
\label{Im (bar(dz''_1) delta z''_1) = 0}
\end{equation}
Simplifying (\ref{Im (d psi^dag delta psi) = }) with (\ref{Im (bar(dz'_1) delta z'_1) = 0}) and (\ref{Im (bar(dz''_1) delta z''_1) = 0}), it holds
$$      - \Im (d \psi^\dag \delta \psi) 
               =   - \Im (  \overline{dz'_2} \, \delta z'_2 
                    - \overline{dz''_2} \, \delta z''_2) 
$$
By comparison with (\ref{(1/4) Tr (d Omega Omega delta Omega) = Im(bar(dz'_2) delta z'_2 - bar(dz''_2) delta z''_2)}), we conclude that the identity (\ref{Im((d psi)^dag delta psi) = (1/4) Tr (d Omega Omega delta Omega)}) is true for the simple case where the spinor value is (\ref{hat(psi) = (1 0 0 0) in Sigma_(+)}). 

\vspace{0.5cm}

\textbf{(ii) Relativistic covariance.} Let $P$ be a Lorentz transformation and $S = S(P)$, according  to Theorem \ref{Thm S(P) for Lorents transformations}. Taking into account (\ref{transformation law of Omega}) and (\ref{tilde(psi) = S^(-1) psi = S^dag psi & gamma(tilde(U)) =}), we have
$$ Tr (d \tilde{\Omega} \, \tilde{\Omega} \, \delta \tilde{\Omega})  
   = Tr ((P^* d \Omega \, P) \, (P^* \Omega \, P)\, (P^* \delta \Omega \, P))  
   = Tr (d \Omega \, \Omega \, \delta \Omega)  
$$
$$ \Im (d \tilde{\psi}^\dag \delta \tilde{\psi}) 
      = \Im ((S^\dag d \psi)^\dag (S^\dag \delta \psi)) 
      = \Im (d \psi^\dag \delta \psi) 
$$
that achieves the proof. $\blacksquare$

\vspace{0.5cm}

By Theorem \ref{thm link psi <-> (I  J)}, we have seen that there exists a surjective map  from the manifold $ \mathcal{W}_{10} $ defined by ( \ref{W_(10) =})  to the manifold $\mathcal{V}_9$ defined by (\ref{V_9 defi}) and
\begin{equation}
     U^* dX = \psi^\dag \gamma (dX) \, \psi
\label{U^* dX = psi^dag gamma(dX) psi}
\end{equation}
of exterior derivative
\begin{equation}
     \delta U^* dX - d U^* \delta X = 2  \, \Re (\delta \psi^\dag \gamma (dX) - d \psi^\dag \gamma (\delta X)) \, \psi
\label{delta U^* dX - d U^* delta X =}
\end{equation}
Now we are able to equip $ \mathcal{W}_{10} $   with the presymplectic form for the particle of rest mass $m_0$ and spin $s$
\begin{equation}
     \omega_\mathcal{W} (\delta \eta, d\eta) = 
         4 \, s \, \Im (d \psi^\dag \delta \psi)
        - 2 \, m_0 \, \Re \lbrack \lbrace \delta \psi^\dag \gamma (dX) - d \psi^\dag \gamma (\delta X) \rbrace \, \psi \rbrack
\label{omega_W =}
\end{equation}
which, according to Theorem \ref{thm symplectic structure of Sigma_6} and (\ref{delta U^* dX - d U^* delta X =}), is the pull-back by this surjective map of the presymplectic form (\ref{Symplectic form for the relativistic particle of spin s}) on $\mathcal{V}_9$. Besides, let us notice that, differentiating (\ref{normalization condition}),
$$ \psi \in \Sigma_6 \;\; \Rightarrow \;\; 
      d (\psi^\dag \psi ) = \psi^\dag d \psi  + d\psi^\dag \psi 
      = 2 \, \Re (\psi^\dag d\psi) 
      = 0 \;\; \Rightarrow \;\;
     i \psi^\dag d \psi \in \mathbb{R}
$$
If we define on $\mathcal{W}_{10}$ the 1-form
\begin{equation}
     \alpha_\mathcal{W}  (d \eta) = - 2 \,  i \, s \, \psi^\dag d \psi - m_0 \, \psi^\dag \gamma (dX) \, \psi
\label{varpi_W =}
\end{equation}
we verify that (\ref{omega_W =}) is the exterior derivative of (\ref{varpi_W =})
$$ \omega_\mathcal{W}  = \mbox{d}\alpha_\mathcal{W} 
$$

To determine the characteristic foliation of $\alpha_\mathcal{W}$, it is more convenient and always possible by a suitable basis change of $\mathcal{E}$  to return to the simple case where
\begin{equation}
    \psi = \left[ {{\begin{array}{c}
           z'_1  \hfill  \\
           0  \hfill  \\
           0  \hfill  \\           
           0  \hfill  \\
   \end{array} }} \right], \qquad  z'_1 = e^{i \, \theta} \in \mathsf{U} (1)
\label{psi = (z'_1 0 0 0) z'_1 = e^(i theta) in U(1)}
\end{equation}
which belongs to $\Sigma_6$. We notice that 
$$ \Im (d \psi^\dag \delta \psi)  = \Im (\overline{d z'_1} \, \delta  z'_1) =  \Im (- i \, d\theta \, \overline{ z'_1} . i \,\delta \theta \,  z'_1) 
          = \Im (d\theta \, \delta \theta) = 0, \qquad 
          \psi^\dag d \psi  = \overline{d  z'_1} \, \delta  z'_1
$$
then, owing  to (\ref{delta U^* dX - d U^* delta X =}) and   (\ref{omega_W =})
$$ d \eta \in \mbox{ker} \, \omega_\mathcal{W}  \;\;  \Leftrightarrow \;\;
\forall \delta X, \delta U, \;\; m_0 \, (\delta X^* dU - dX^* \delta U) = 0 \;\;  \Leftrightarrow \;\;
dX = dU = 0
$$
and, owing to  (\ref{varpi_W =})
$$ d \eta \in \mbox{ker} \, \alpha_\mathcal{W}  \;\;  \Leftrightarrow \;\;
-2 \, i \, s \, \overline{z'_1} \, d  z'_1 - m_0 \,  U^* dX = 0  \;\;  \Leftrightarrow \;\;
z'_1 = z \, e^{\frac{i \, m_0}{2 \, s}  \, U^* X}, \;\; z = C^{te}, \;\;  z \in \mathsf{U}(1)
$$
that leads to determine the leaves of the characteristic foliation  of $\alpha_\mathcal{W}$
$$ d \eta \in \alpha_\mathcal{W}  \cap \mbox{ker} \, \omega_\mathcal{W}  \qquad  \Leftrightarrow \qquad
dX = dU = 0, \qquad d  z'_1 + m_0 \,  U^* dX = 0
$$
from which we deduce
$$X  = C^{te}, \;\;   U = C^{te}, \;\;  \psi = e^{\frac{i \, m_0}{2 \, s}  \, U^* X} \zeta, \;\;
\zeta = \left[ {{\begin{array}{c}
           z  \hfill  \\
           0  \hfill  \\
           0  \hfill  \\           
           0  \hfill  \\
   \end{array} }} \right], \;\; z = C^{te}, \;\;  z \in \mathsf{U}(1)
$$
hence the 8 coordinates of the unique intersection point of the leaf and the transverse manifold $\mathcal{Y}_9$ are the 4 coordinates of $X \in E_{1,3}$, the 3 independent coordinates of $U \in E_{1,3}$ such that $U^* U = 1$ and the unique coordinate of $z \in \mathsf{U}(1)$. 
This foliation is sectionable and then, if we define for this action 
$$ \xi = \Phi (\eta) = \mbox{orb} (\eta)
$$
$\xi$ varies over a manifold $\mathcal{Y}_9$ in such a way that $\Phi$ is a smooth map from $\mathcal{W}_{10}$ to $\mathcal{Y}_9$. The 1-form $\alpha_\mathcal{W}$, which  is an integral invariant, pases to the quotient $\mathcal{Y}_9$ and defines a 1-form  $\alpha$ on $\mathcal{Y}_9$. It equips $\mathcal{Y}_9$ with a structure of a prequantum manifold  and $(\mathcal{Y}_9, \pi)$ is a prequantization of the space of motions $\mathcal{U}_8$ (Figure \ref{fig Prequantization of the particle with spin with spin 1/2}). 

\vspace{0.5cm}

We prove by Theorem \ref{thm action conservation)} that the value of the action integral does not depends on the choice of the orbit of $\mathsf{U}(1)$.  For easiness, we calculate that for the orbit passing by (\ref{hat(psi) = (1 0 0 0) in Sigma_(+)}) of which the elements are of the form (\ref{psi = (z'_1 0 0 0) z'_1 = e^(i theta) in U(1)})
$$ \int_c \alpha (d \xi)   = - 2 \,  i \, s \,  \int_c \, \psi^\dag d \psi 
                                      = - 2 \,  i \, s \,  \int_c \,  \overline{z'_1} \, d  z'_1
                                      = 2 \,  s \,  \int^{2 \, \pi}_0 \,  d \theta
$$
Then the Sommerfeld quantization condition (\ref{Sommerfeld quantization condition BIS}) gives
$$     \int_c \alpha (d \xi)   = 4 \, \pi \, s = 2 \, \pi \, n, \qquad n \in \mathbb{N}
$$
and the spin must be an integer multiple of $1/2$
$$ s = \frac{n}{2}, \qquad n \in \mathbb{N}
$$
Later on, we consider only the case of the electron for which $s = 1/2$.

\section{Geometric quantization}
\label{Section - Geometric quantization}

In  \cite{SSD, SSDEng}, a \textbf{Planck manifold} is defined to be any submanifold $\mathcal{F}$ of the prequantum manifold $\mathcal{Y}$ on which the 1-form $\alpha$ vanishes
$$ \lbrack \delta \xi \;\; \mbox{tangent to} \;\;  \mathcal{F} \,\; \mbox{at} \;\; \xi \rbrack \;\; \Rightarrow \;\;
\alpha (\delta \xi) = 0
$$
then every Planck manifold is isotropic with respect to $\omega_\mathcal{Y}$. If $\pi$ is injective on $\mathcal{F}$, the image $F = \pi (\mathcal{F})$ is  diffeomorphic to $\mathcal{F}$, then $F$ is isotropic with respect to $\omega_\mathcal{U}$. We say that $\mathcal{F}$ is a Planck lift of $F$. Conversely, if we know an isotropic  foliation $x \mapsto H$ of $\mathcal{U}$, we define a foliation $\xi \mapsto \mathcal{H}$ of $\mathcal{Y}$, called \textbf{Planck foliation}. The leaves of $\mathcal{H}$ are, at least locally, Planck lifts of the leaves of $H$.

Let $u: \mathcal{U} \mapsto \mathbb{R}$ be a smooth field on the space of motions (which Souriau call a  dynamical variable). Using Souriau's notations, there exists a tangent vector field
$$\underset{u}{\delta} x = \mbox{grad} \, u
$$
called the \textbf{symplectic gradient} by Souriau (or also Hamiltonian vector field) such that
\begin{equation}
     \iota_{\underset{u}{\delta} x} \, \omega_\mathcal{U}  = - du
\label{symplectic gradient defi}
\end{equation}
for any scalar- or vector-valued field $f$, it defines a derivation
$$ \underset{u}{\delta} f = (\underset{u}{\delta} x )(f)
$$
We call \textbf{state vector} every  smooth function $\Psi$ on $\mathcal{Y}$ with compact support such that
\begin{equation}
     \Psi (z \cdot \xi) = z \, \Psi (\xi), \quad z \in \mathsf{U}(1)
\label{Psi (z cdot xi) = z Psi (xi)}
\end{equation}
then  $\Psi^\dag (\xi)  \Psi' (\xi) $ is constant on the  orbit  of  $\mathsf{U}(1)$ and we may apply it the measure $\mbox{d}\mu$  on $\mathcal{U}$. As usual in quantum mechanics, we consider a set  $\mathcal{H} (\mathcal{Y})$ of such functions which is a Hilbert space for the complex inner product 
$$\left< \Psi, \Psi' \right> = \int_\mathcal{U} \Psi^\dag \Psi' \, \mbox{d}\mu
$$
To every dynamical variable defined on $\mathcal{U}$, we can associate an Hermitian operator $\hat{u}$ on $\mathcal{H} (\mathcal{Y})$, called \textbf{observable}, defined by
$$ \hat{u} \, \Psi = - i \, \underset{u}{\delta} \, \Psi
$$
such as the map $u \mapsto \hat{u} $ is $\mathbb{R}$-linear  and injective and
$$ \hat{1} = 1_{\mathcal{H} (\mathcal{Y})}, \qquad 
     \hat{u} \, \hat{u'} - \hat{u'} \, \hat{u} = - i \, \widehat{\left\lbrace u, u' \right\rbrace}
$$
then 
$$ \underset{1}{\delta} \, \Psi = i \, \hat{1} \, \Psi = i \, \Psi
$$

In (\cite{SSD, SSDEng}, (18.71)), a construction is proposed to determine the state vector which described what is happening during an experiment, based on an isotropic foliation of the space of motions $\mathcal{U}$ which, in favorable circumstances, can be lifted to a Planck foliation. Let us assume that the dynamical variables $u_j \, (j = 1, 2, \ldots , p)$ are in involutions ($\left\lbrace u_j , u_k \right\rbrace = 0$). If the equations $u_j = C^{te}$ define connected submanifolds of the space of motions, then these manifolds are leaves of a coisotropic foliation. At each $x \in \mathcal{U}$, the vector space orthogonal to the leaf passing by $x$ is the kernel of the 2-form induced on the leaf. Thus on each coisotropic leaf this kernel determines an isotropic foliation.  The corresponding Planck lift is the foliation $\xi \mapsto \mathcal{H}$ of $\mathcal{Y}$. $\mathcal{H}$ is the vector space spanned by the vectors
$$ \underset{u_j}{\delta} \xi - u_j \, \underset{1}{\delta} \xi
$$
The corresponding \textbf{Planck condition} will be satisfied by a state vector if
\begin{equation}
    \underset{u_j}{\delta} \, \Psi = u_j \, \underset{1}{\delta} \, \Psi = i \, u_j \, \Psi
\label{Planck condition}
\end{equation}
Let $G$ be a symplectic group of the space of motions $\mathcal{U}$ with a null class of symplectic cohomology and $\tilde{G}$ be a normal Abelian Lie subgroup of $G$. Then it is shown in (\cite{SSD, SSDEng}, (19.39)) that if we choose a basis $Z_1, \ldots , Z_p$ for $\tilde{\mathfrak{g}}$, then the dynamical variables $u_j = \mu (Z_j)$ are in involutions. 

\section{Quantization of the relativistic particle with  spin 1/2}
\label{Section - Quantization of the relativistic particle with  spin 1/2}

We can apply the previous construction and determine the corresponding Planck condition by taking for $G$ the Poincar\'e group and for $\tilde{G}$ the group of space-time translations of momentum $\mu$ given by (\ref{mu (Z) = - Pi delta C - (1/2) Tr (M delta P)}) and (\ref{ Pi = m I & I^* I = 1 & J^* J = -1 & J^* I = 0})
$$ \mu (Z) = - \Pi^* \delta X  = - m_0 \,  U^* \delta X 
$$
The dynamical variables $u_j$ are the energy $\Pi^0$ and the components $\Pi^k \, (k = 1, 2, 3)$ of the linear momentum. Owing to (\ref{Symplectic form for the relativistic particle of spin s}) and (\ref{symplectic gradient defi}), the symplectic gradient of $\mu(Z)$ is given by
$$ \iota_{\underset{\mu(Z)}{\delta} x} \, \omega_\mathcal{U} (dx)
    = \omega_\mathcal{U} (\underset{\mu(Z)}{\delta} x, dx)
    = m_0 (\underset{\mu(Z)}{\delta} X^* dU - dX^* \underset{\mu(Z)}{\delta} U)
    =  - \underset{\mu(Z)}{\delta} X^* (- m_0 \, dU)
$$
then 
$$ \underset{\mu(Z)}{\delta} x = \delta X^* \, \frac{\partial}{\partial X} 
         = \delta X^j \partial_j
$$
The Planck condition (\ref{Planck condition}) leads to the partial derivative system
$$ \frac{\partial \Psi}{\partial X} = - i \, m_0 \, U^* \, \Psi
$$
of which the solution in  $\mathcal{H} (\mathcal{Y})$  are of the form
$$ \Psi (X, U, z) = e^{- i \, m_0 \, U^* X} \, \psi (U, z) 
$$
The condition (\ref{Psi (z cdot xi) = z Psi (xi)}) gives
$$ \psi (U, z) = \psi_0 (U) \, z, \qquad z \in \mathsf{U}(1)
$$
In the sequel, the shall not write explictly the dependence of the state vector with respect to $z$ and write
$$ \Psi (\xi) = e^{- i \, m_0 \, U^* X} \, \psi (U) 
$$
If the smooth function $U \mapsto \psi(U)$ on the manifold 
$$\mathcal{M} = \lbrace U \in \mathcal{E}_{1,3} \;\; \mbox{such that} \; \; 
                         U^* U = 1 \;\; \mbox{and} \;\; U \; \mbox{future-directed}
\rbrace
$$ 
equipped with the measure  $\mbox{d}\mu$, has compact support, we can define 
\begin{equation}
     \tilde{\Psi} (X) = \int_\mathcal{M} e^{- i \, m_0 \, U^* X}  \, \psi \, \mbox{d}\mu
\label{tilde(Psi) (X) = int_H e^(-i m_0 U^*X d mu}
\end{equation}
where $\psi$ is an eigenvector of $\gamma(U)$ with eigenvalue $\epsilon$.
Owing to $U^* X = G_{jk} U^j X^k$, (\ref{gamma^j = G^(jm) gamma_m}) and  (\ref{gamma : E_(1,3) -> Gamma_(+) : X -> gamma(X) = X_i gamma_i}), it holds
$$ i \, \gamma^k \partial_k  \tilde{\Psi} = - i \, G^{kl} \gamma_l
      \int_\mathcal{M} \, i \, m_0 \, G_{jk} U^j  \,  e^{- i \, m_0 \, U^* X} \, \psi \,  \mbox{d}\mu
      =  m_ 0 \int_\mathcal{M}  \,  e^{- i \, m_0 \, U^* X} \, \gamma (U) \, \psi \,   \mbox{d}\mu
$$
that leads to the \textbf{Dirac equation}, owing to  (\ref{gamma (U)  psi = epsilon psi})
\begin{equation}
     i \, \gamma^k \partial_k  \tilde{\Psi} - \epsilon \, m_0 \, \tilde{\Psi} = 0
\label{Dirac equation}
\end{equation}

Let us define 
$$ \tilde{U}^j = \tilde{\Psi}^\dag \gamma^j \tilde{\Psi}
$$
Owing to (\ref{I^0 = & I^k =})  and (\ref{I = U & U^* U = 1 & J^*J = -1 & J^*U = 0}), we have
$$  \tilde{U}^0 = \mid \tilde{\Psi}' \mid^2 + \mid \tilde{\Psi}'' \mid^2  \geq 0
$$
then $\tilde{U}^0$ being non negative, it can be interpreted as a \textbf{probability density}. It holds
$$ \partial_j \tilde{U}^j = \tilde{\Psi}^\dag \gamma^j \, \partial_j \tilde{\Psi}
                   + (\partial_j \tilde{\Psi}^\dag) \, \gamma^j \, \tilde{\Psi}
$$
As the matrices $\gamma^j$ are Hermitian,
$$ \partial_j \tilde{U}^j = \tilde{\Psi}^\dag (\gamma^j \, \partial_j \tilde{\Psi})
                   + (\gamma^j \, \partial_j \tilde{\Psi})^\dag \, \tilde{\Psi}
$$
and, taking into account the Dirac equation (\ref{Dirac equation})
$$  \partial_j \tilde{U}^j =0
$$
which proves the \textbf{conservation of the probability current} $\tilde{U}$. 

\vspace{0.5cm}

In particular, if $\mathsf{d} \mu$ is a Dirac measure, it is worth to observe that we have
\begin{equation}
    \tilde{\Psi} = e^{- i \, m_0 \, U^* X}  \, \psi 
\label{tilde(Psi) = e^() i m_0 U^*X) psi}
\end{equation}
Taking into account (\ref{I^j = psi^dag gamma^j psi}) and (\ref{I = U & U^* U = 1 & J^*J = -1 & J^*U = 0}), we have
$$ \tilde{U}^j = \psi^\dag \gamma^j \psi = U^j 
$$
that establishes a link between this current and the 4-velocity of the corresponding classical particle. 

\vspace{0.5cm}

Moreover, it is worth to have a look to the 4-vector $J$ defined by (\ref{J^* dX = psi^dag gamma (dX) gamma_5 psi}). It is easy to verify  that, owing to (\ref{gamma^j = G^(jm) gamma_m}), its components are 
$$J^k = i \, \psi^\dag \gamma^k \gamma_5 \, \psi
$$
By analogy, we introduce the 4-vector $\tilde{J}$ of components
$$ \tilde{J}^k  = i \, \tilde{\Psi}^\dag \gamma^k \gamma_5 \, \tilde{\Psi}
$$
Owing to (\ref{gamma(U) gamma(V) + gamma(V) gamma(U) = (U^* V) 1_(C^4)}), we have
\begin{equation}
     \frac{1}{2} \, \lbrack \gamma^k \gamma_l + \gamma_l \, \gamma^k \rbrack = \delta^k_l
\label{(1/2) ( gamma^k gamma_l + gamma_l gamma^k) = delta^k_l}
\end{equation}
and it holds
$$ \partial_k \tilde{J}^k = i \, \tilde{\Psi}^\dag \gamma^k \gamma_5 \, \partial_k \tilde{\Psi}
                           + i \, (\partial_k \tilde{\Psi}^\dag) \, \gamma^k \gamma_5 \, \tilde{\Psi}
$$ 
Transforming the first term with (\ref{(1/2) ( gamma^k gamma_l + gamma_l gamma^k) = delta^k_l}) and the second term thanks to the fact that the matrix $\gamma^k$ is Hermitian, we have
$$ \partial_k \tilde{J}^k = - \tilde{\Psi}^\dag \gamma_5 \,  (i \, \gamma^k \, \partial_k \tilde{\Psi})
                           - ( i \, \gamma^k \partial_k \tilde{\Psi})^\dag  \,  \gamma_5 \, \tilde{\Psi}
$$
Owing to the Dirac equation (\ref{Dirac equation}), we obtain
$$ \partial_k \tilde{J}^k =- 2 \epsilon \, m_0 \,  \tilde{\Psi}^\dag \gamma_5 \,   \tilde{\Psi}
$$
In the case where $\tilde{\Psi}$ is given by (\ref{tilde(Psi) = e^() i m_0 U^*X) psi}), we have
$$ \tilde{\Psi}^\dag \gamma_5 \,  \tilde{\Psi}  = \psi^\dag \gamma_5 \,   \psi
$$
As $\psi \in \Sigma_6$, we have
$$ \tilde{\Psi}^\dag \gamma_5 \,  \tilde{\Psi} = 0
$$
from which we deduce the \textbf{conservation} of the  \textbf{polarization current} or \textbf{spin current} $\tilde{J}$
$$ \partial_k \tilde{J}^k = 0
$$

\section{Charge conjugation, parity transformation and time reversal symmetries}
\label{Section - Charge conjugation, parity transformation and time reversal symmetries}

To represent theses symmetries in the present framework, we study the properties of the orthonormal basis (\ref{defi gamma_alpha}). Introducing
$$ \tau_j = \left\lbrace \begin{array}{l}
        +1 \; \mbox{if} \; j = 0  \\
        - 1 \; \mbox{otherwise}   \\
   \end{array}
  \right. 
$$
the index rising (\ref{gamma^j = G^(jm) gamma_m}) for the metric (\ref{X^* X = metric of E_(1,4)}) 
$$      X^* X =  (X^0)^2 - (X^1)^2 - (X^2)^2 -(X^3)^2 - (X^5)^2
                   =  t^2 - (x_1)^2 - (x_2)^2- (x_3)^2- y^2
$$
of the 5D hyperbolic Euclidean space $E_{1,4}$ becomes
\begin{equation}
     \gamma^j = \tau_j \gamma_j
\label{gamma^j = tau_j gamma_j}
\end{equation}
Owing to (\ref{(gamma_alpha)^2 = for alpha = 0...5}), we have
\begin{equation}
      (\gamma_j)^{-1} = \tau_j \gamma_j
\label{(gamma_j)^(-1) = tau_j gamma_j}
\end{equation}
and, because of  the hermiticity of the matrices $\gamma_j$,
$$ (\gamma_j)^{-1} = \tau_j \gamma^\dag_j
$$
the matrix $\gamma_0$ is \textbf{unitary} while the matrices $\gamma_1, \gamma_2, \gamma_3, \gamma_4$ are \textbf{neg-unitary}, that is
$$ \gamma_j \gamma^\dag_j = - 1_{\mathbb{C}^4} \qquad  \mbox{for} \; j \in \lbrace 1, 2, 3, 5 \rbrace
$$

Taking into account (\ref{gamma^j = tau_j gamma_j})  and  (\ref{(gamma_j)^(-1) = tau_j gamma_j}), we have
$$ (\gamma_k)^{-1} \gamma^k \gamma_k  = (\gamma_k)^2 \gamma_k  = \tau_k \gamma_k 
$$
then
\begin{equation}
      (\gamma_k)^{-1} \gamma^k \gamma_k  = \gamma^k 
\label{(gamma_k)^(-1) gamma^k gamma_k = gamma^k}
\end{equation}
Besides, owing to (\ref{gamma^j = tau_j gamma_j}), (\ref{(gamma_j)^(-1) = tau_j gamma_j}) and (\ref{gamma_alpha gamma_beta + gamma_beta gamma_alpha = 0 for alpha neq beta}), we have for $j \neq k$
$$ (\gamma_j)^{-1} \gamma^k \gamma_j  = \tau_j \tau_k \gamma_j  (\gamma_k \gamma_j )
               = - \tau_j \tau_k (\gamma_j )^2 \gamma_k 
               = - \tau_k \gamma_k 
$$
then, owing to (\ref{gamma^j = tau_j gamma_j}), it holds
\begin{equation}
      (\gamma_j)^{-1} \gamma^k \gamma_j  = - \gamma^k \;\; \mbox{for} \; \; j \neq k
\label{(gamma_j)^(-1) gamma^k gamma_j = - gamma^k}
\end{equation}

By the choice of a suitable orthonormal basis of $E_{1,4}$,  as columns $\psi$ represent spinors of the Hermitian space $\mathcal{E}$, when we apply the transformation matrix $S$ to  the column $\tilde{\psi}$, according to 
\begin{equation}
      \psi = S^{-1}\tilde{\psi}
\label{psi = S^(-1) tilde(psi)}
\end{equation}
which means the action of a symmetry on the space $\mathcal{E}$ of spinors, $S^{-1}$ acts by conjugation on the matrices $\gamma^k$ representing elements of $End(\mathcal{E})$
\begin{equation}
      \gamma^k = S^{-1} \tilde{\gamma}^k \, S
\label{gamma^k = S^(-1) tilde(gamma)^k S}
\end{equation}
which means the corresponding action of the symmetry on the 5D space $E_{1,4}$. For $S = \gamma_j$, owing to (\ref{(gamma_k)^(-1) gamma^k gamma_k = gamma^k}) and (\ref{(gamma_j)^(-1) gamma^k gamma_j = - gamma^k}), the action on $\gamma^k$ is the identity if $j = k$ or a sign change if $j \neq k$.
Through (\ref{(gamma(X))^2 = (X^* X ) 1_(C^4)}), there is a correspondance between the matrix $\gamma_k$ and the coordinate $X^k$, up to the sign. Then we claim the following rule:

\vspace {0.2cm}

\textbf{When a symmetry acts on the space of spinor by} $\gamma_j$\textbf{ in such a way that}
$$ \gamma^k = \left\lbrace \begin{array}{l}
           \tilde{\gamma}^k \quad\;  \mbox{if} \; k = j  \\
        - \tilde{\gamma}^k \;\; \mbox{otherwise}   \\
   \end{array}
  \right. 
$$
\textbf{it acts on the 5D space by}
$$ X^k = \left\lbrace \begin{array}{l}
        - \tilde{X}^k \;\;  \mbox{if} \; k = j  \\
           \tilde{X}^k \quad\; \mbox{otherwise}   \\
   \end{array}
  \right. 
$$

\vspace {0.2cm}

For instance, with the notations of Section \ref{Section - Dirac spinors}, the symmetry acting on the spinors by $\gamma_0$ acts on the coordinates of the 5D space $E_{1,4}$ by
$$ t = - \tilde{t}, \qquad  x = \tilde{x}, \qquad  y = \tilde{y}
$$
which means the \textbf{time reversal}. We denote it $\mathsf{T}$. 

\vspace {0.2cm}

Likewise, the symmetry acting on the spinors by $\gamma_5$ acts on the coordinates by
$$ t = \tilde{t}, \qquad  x = \tilde{x}, \qquad  y = - \tilde{y}
$$
\textbf{Our hypothesis is that the  fifth coordinate }$y$ \textbf{is that of the Kaluza-Klein theory} (see \cite{de Saxce 2025}) in which the electric charge $q$ is the linear momentum with respect to $y$, then the symmetry changes its sign
$$ \tilde{q}  = m_0 \, \frac{d \tilde{y}}{d \tilde{t}} = - m_0 \, \frac{d y}{d t} = - q
$$
which means it is the \textbf{charge conjugation}. We denote it $\mathsf{C}$.

Morover, the state vector $\tilde{\Psi} $ of an electron belongs to  $\Sigma_{+}$. When the electron is at rest,  owing to (\ref{(gamma_j)^(-1) = tau_j gamma_j}) and the transformation law of spinors (\ref{psi = S^(-1) tilde(psi)}) with $S = \gamma_5$, we have
\begin{equation}
     \Psi =  (\gamma_5)^{-1} \, \tilde{\Psi} 
            = \left[ {{\begin{array}{cc}
        0  \hfill &  1_{\mathbb{C}^2} \hfill  \\
       - 1_{\mathbb{C}^2} \hfill &  0    \hfill  \\
   \end{array} }} \right] \,
   \left[ {{\begin{array}{c}
        \psi'\hfill  \\
        0    \hfill  \\
   \end{array} }} \right]
    = \left[ {{\begin{array}{c}
        0 \hfill  \\
        - \psi'    \hfill  \\
   \end{array} }} \right] \in \Sigma_{-}
\label{Psi = (gamma_5)^(-1) tilde(Psi)}
\end{equation}
and the wave function $\tilde{\Psi} $ is transformed into that $\Psi$ of the positron. Because of the relativistic covariance, it is true even if these particles are not at rest. As the electron satisfies Dirac equation (\ref{Dirac equation}) with $\epsilon = 1$ and the dummy index $k$ running from 0 to 3
\begin{equation}
      \lbrack i \, \gamma^k \partial_k - m_0 \, 1_{\mathbb{C}^4} \rbrack \, \tilde{\Psi} = 0
\label{(i gamma^k partial_k - m_0 1_(C^4) ) tilde(Psi) = 0}
\end{equation}
we have
$$  (\gamma_5)^{-1} \, \lbrack i \, \gamma^k \partial_k - m_0 \, 1_{\mathbb{C}^4} \rbrack \, \tilde{\Psi} = 0
$$
or
$$  (\gamma_5)^{-1} \, \lbrack i \, \gamma^k \partial_k - m_0 \, 1_{\mathbb{C}^4} \rbrack 
      \, \gamma_5 \, (\gamma_5)^{-1}  \, \tilde{\Psi} = 0
$$
Owing to (\ref{(gamma_j)^(-1) = tau_j gamma_j}), (\ref{(gamma_j)^(-1) gamma^k gamma_j = - gamma^k}) and (\ref{Psi = (gamma_5)^(-1) tilde(Psi)}), we obtain
$$ \lbrack i \, \gamma^k \partial_k + m_0 \, 1_{\mathbb{C}^4} \rbrack \, \Psi = 0
$$
that is the Dirac equation with $\epsilon = -1$.  Then the standard form (\ref{Dirac equation}) of the Dirac equation is preserved by the charge conjugation $\mathsf{C}$. 


It is easy to verify that 
$$ \gamma_0 \gamma_1 \gamma_2 \gamma_3 \gamma_5 = - 1_{\mathbb{C}^4}
$$
or, owing to (\ref{(gamma_alpha)^2 = for alpha = 0...5})
$$ \gamma_1 \gamma_2 \gamma_3 =  \gamma_0 \gamma_5
$$
Applying the aforementioned rule,  it acts on the coordinates of the 5D space $E_{1,4}$ by 
$$ t =  \tilde{t}, \qquad  x = - \tilde{x}, \qquad  y = \tilde{y}
$$
which means the \textbf{parity transformation}. We denote it $\mathsf{P}$.

\vspace{0.5cm}

\textbf{Remark 1.} It is  worth to  notice that the representation of $\mathsf{T}$ in the space of spinors is unitary while those of $\mathsf{C}$ and $\mathsf{P}$ are neg-unitary.

\vspace{0.5cm}

\textbf{Remark 2.} The previous discussion is summarized in Table 3. This systematic construction of the symmetries of charge conjugation, parity transformation and time reversal seems to us simpler and more readable than that of the classical presentations as in \cite{Bjorken 1964}.

\begin{center}
\begin{tabular}{|| c || c || c c c ||}
 \hline \hline
  symmetry &  & $\mathsf{C}$ & $\mathsf{P}$ & $\mathsf{T}$ \\
  \hline \hline
   & & & & \\
   action on spinors & $\psi$ & $\gamma_5$ & $\gamma_1 \gamma_2 \gamma_3 $ & $\gamma_0$\\
  & & & & \\ 
   \hline\hline
   & & & & \\
   & $ t = $ & $\tilde{t}$ & $\tilde{t}$  & $- \tilde{t}$ \\
   & & & & \\
action on coordinates    & $ x= $ & $\tilde{x}$ & $- \tilde{x}$  & $\tilde{x}$ \\
  & & & & \\
      & $ y= $ & $- \tilde{y}$ & $ \tilde{y}$  & $\tilde{y}$ \\
  & & & & \\
    \hline \hline
\end{tabular}
\end{center}
\begin{center}
\textbf{Table 3:} action of $\mathsf{C}, \mathsf{P}, \mathsf{T}$ on spinors and coordinates
\end{center}

\section{Conclusions}

In classical mechanics, an electron is considered as a particle while in quantum mechanics it exhibits particle or wave properties according to the experimental circumstances. In this paper, we followed Souriau's approach, the merit of which is to propose a new way to deduce the wave equation of the quantum particle from the symplectic representation of the motion of the classical particle. The main developments are algebraic from which the Dirac wave equation is deduced  only in the last section. In particular, the construction of the spin group is purely algebraic without resorting to the matrix exponential. This approach sheds light on the link between the velocity of the classical particle and the probability current of the quantum particle. 

\vspace{0.5cm}

In Souriau's geometric quantization of the electron, there are four key formulae: 
\begin{itemize}
\item The definition (\ref{I^* dX = psi^dag gamma (dX) psi}) of the 4-vector $I$
$$    \forall dX \in \mathbb{R}^4, \qquad I^* dX = \psi^\dag \gamma (dX) \, \psi 
$$
which turned out to be the 4-velocity $U$ of the classical particles and later the model of the probability current of the quantum particle that  satisfies a well-known conservation identity.
\item The definition (\ref{J^* dX = psi^dag gamma (dX) gamma_5 psi}) of the 4-vector $J$
$$       \forall dX \in \mathbb{R}^4, \qquad J^* dX = i \,\psi^\dag \gamma (dX) \, \gamma_5 \,  \psi 
$$
which proved to be the 4-spin vector of the classical particle and next a spin current of the quantum particle that satisfies also a conservation identity.
\item The link between the spin part of the symplectic form of the classical particle and the quantum counterpart in terms of spinors in the form of the identity (\ref{Im((d psi)^dag delta psi) = (1/4) Tr (d Omega Omega delta Omega)})
$$   - \Im (d \psi^\dag \delta \psi) = \frac{1}{4} \, Tr (d \Omega \, \Omega \, \delta \Omega)
$$
\item The contact structure defined by the 1-form (\ref{varpi_W =})
$$ \alpha_\mathcal{W}  (d \eta) = - 2 \,  i \, s \, \psi^\dag d \psi - m_0 \, \psi^\dag \gamma (dX) \, \psi
$$
\end{itemize}
To throw some light on them, we supported the two definitons on the basis of the decomposition (\ref{Gamma = Gamma-i oplus Gamma_(+) oplus Gamma_(-)})  of the space $\Gamma$ of quaternionic matrices and we proved two main results, Theorems \ref{thm link psi <-> (I  J)} and \ref{thm symplectic structure of Sigma_6}.

\vspace{0.5cm}

To characterize the classical particles, Jean-Marie Souriau proposes in (\cite{SSD, SSDEng}, (13.1)) axioms of mechanics, the first of these being: "the space of motions of a dynamical system is a connected symplectic manifold". He does not give explicitly a similar formal definition for the quantum particle but he represents the quantum electron with the set of spinors such that $\psi^\dag \psi = 1$, which seems to indicate that he considers the space of a quantum particle is also connected. However Dirac claims that his equation allows to represent two distinct particles, the electron and the positron. In the present formalism, we considered also the set of spinors such that $\psi^\dag \psi = -1$ which can be associated to the positron. Therefore, there are two distinct quantum particles in correspondance with a single classical particle but, unlike Dirac, these two particles have the same spin and rest mass (not of opposite sign). It is the sign of the electric charge and not that of the rest mass that distinguishes the positron from the electron. Our opinion is that this ambiguity in Dirac theory  on the sign of the rest mass (then of the energy) results from the absence of the charge in the formalism, ambiguity which might be resolved by introducing the Kaluza-Klein fifth dimension of the Universe that allows to identify naturally the charge as an extra invariant of the coadjoint orbit as proposed in \cite{de Saxce 2025}. It is also worth noting that it is troubling to observe that the space $\Gamma_{+}$ of Dirac matrices is of dimension 5 and is only used in a reduced version of dimension 4. All these considerations could be a topic of investigation which we hope to tackle in the future.

\vspace{0.5cm}

\textbf{Acknowledgement}

\vspace{0.5cm}

The author would like to thank to Richard Kerner for his suggestion to study this topics and his valuable advice.


\end{document}